\documentclass[12pt]{article}

\usepackage{amsmath}
\usepackage{amssymb}
\usepackage[margin=1.5cm]{geometry}
\usepackage[resetlabels]{multibib}
\newcites{app}{Supplementary material references}
\usepackage{amsthm}
\usepackage{hyperref}
\usepackage[nameinlink]{cleveref}

\newtheorem{theorem}{Theorem}
\newtheorem{lemma}[theorem]{Lemma}
\newtheorem{proposition}[theorem]{Proposition}

\usepackage{lipsum}
\usepackage{amsfonts}
\usepackage{graphicx}
\usepackage{epstopdf}
\usepackage{algorithmic}
\ifpdf
  \DeclareGraphicsExtensions{.eps,.pdf,.png,.jpg}
\else
  \DeclareGraphicsExtensions{.eps}
\fi

\usepackage{enumitem}
\setlist[enumerate]{leftmargin=.5in}
\setlist[itemize]{leftmargin=.5in}

\newtheorem{remark}[theorem]{Remark}

\crefname{fact}{Fact}{Facts}

\title{Dose-limited interventions in an epidemiological model}

\author{A.S. Abdramane\thanks{Department of Mathematics, University of N'Djamena, N'Djamena, Chad}
	\and H. Djimramadji\footnotemark[2]
	\and M.S. Daoussa Haggar\footnotemark[2]
	\and P.M. Tchepmo Djomegni\thanks{School of Mathematics and Statistical Sciences, North-West University, South Africa}
	\and Julien Arino\thanks{Department of Mathematics, University of Manitoba, Winnipeg, Manitoba, Canada (\texttt{julien.arino@umanitoba.ca})}
}

\usepackage{amsopn}

\usepackage{booktabs}
\usepackage{graphicx}
\usepackage{subcaption}
\usepackage[english]{babel}
\usepackage[T1]{fontenc}
\PassOptionsToPackage{force}{filehook}
\usepackage{tikz}
\usetikzlibrary{shapes,arrows}
\usetikzlibrary{positioning}
\tikzstyle{cloud} = [draw, ellipse,fill=red!20, node distance=0.87cm,
minimum height=2em]
\tikzstyle{line} = [draw, -latex']
\usetikzlibrary{shapes.symbols,shapes.callouts,patterns}
\usetikzlibrary{calc}

\usepackage{pgfplots}
\pgfplotsset{compat=1.17} % Use a recent compatibility level

\def\R{\mathcal{R}}
\def\bX{\mathbf{X}}
\def\b0{\mathbf{0}}

\newcommand{\F}{\mathcal{F}}
\def\L{\mathcal{L}}
\newcommand{\W}{\mathcal{W}}

\newcommand{\IR}{\mathbb{R}}
\usepackage{bm}
\def\bE{\bm{E}}

\def\bX{\bm{X}}

\def\b0{\bm{0}}

\usepackage{xspace}

\begin{document}

\maketitle

%%%%%%%%%%%%%%%%%%%%%
\begin{abstract}
    We consider an SLIARS mathematical epidemiology model including intervention in the form of vaccination and treatment.
    Contrary to classical models, it is assumed that treatment doses can be limited in availability.
    Mathematically, we show that most scenarios actually reduce to classic well-known scenarios: having an unreplenished number of doses is akin to having none, while being able to restore stocks is (often) equivalent to the classic situation with vaccination and treatment.
    We also perform a computational analysis, illustrating some of the transient and stochastic dynamics that diverge from deterministic long-term behaviour, as well as the impact of budgetary constraints.
\end{abstract}

\vspace{1ex}\noindent\textbf{Keywords: } Mathematical epidemiology; public health interventions; vaccination; treatment; stockpile management\vspace{1ex}

% REQUIRED

%%%%%%%%%%%%%%%%%%%%%
%%%%%%%%%%%%%%%%%%%%%
%%%%%%%%%%%%%%%%%%%%%
%%%%%%%%%%%%%%%%%%%%%
\section{Introduction}
\label{sec:introduction}
Most mathematical models involving vaccination or treatment make the hypothesis that doses allowing such control mechanisms are available in sufficient quantities.
The effect of varying vaccination or treatment rates is then investigated.
However, in low-income countries or in emergency situations such as those that arise in refugee camps, vaccination or treatment options are not always readily available \cite{dimitrova2023essential,sabahelzain2025implications}.  
The recent SARS-CoV-2 pandemic highlighted the inequity in vaccine distribution between rich and poor countries, revealing structural limitations in global healthcare delivery \cite{acharya2021access,privor2023vaccine,tatar2022covid,wang2025global}.
Furthermore, even when doses reach their intended destinations, there are often subsequent logistical issues spanning supply chains, such as cold-chain failures and expiration dates, that further reduce the actual dose availability and efficacy on the ground \cite{ozawa2019access}.

Mathematically, when exploring constraints on life-saving interventions, the problem has most often been treated as an optimal control or dynamic resource allocation problem. 
In such frameworks, the objective is typically to minimize the disease burden, mortality or associated financial costs given a strictly limited resource budget \cite{hansen2011optimal,lee2010optimal,sharomi2017optimal}. 
A variety of control strategies, from optimal stockpiling to dynamic geographical or demographic targeting, have been analyzed to determine how a limited supply of vaccines and therapeutics should be strategically prioritized during an epidemic \cite{bozzani2021building,kasaie2013simulation,rao2021optimal,wang2024optimal}.
Other studies have highlighted the criticality of balancing multiple interventions, emphasizing that timing and deployment logistics play a fundamental role in mitigating outbreaks \cite{duijzer2018dose,rao2024vaccination}.

While these optimization approaches are highly valuable for defining best practices and guiding public health policies, they do not always capture the baseline epidemiological dynamics that emerge when interventions simply falter due to a lack of supply.
In order to investigate the direct dynamical consequences of dose runouts, we formulate a model in which both vaccine and treatment doses are explicitly tracked and potentially limited.
The model is a variation on the model in \cite{ArinoMilliken2022a}, where only vaccination is considered and, more importantly, the number of doses is not limited.
%%%%%%%%%%%%%%%%%%%%%
%%%%%%%%%%%%%%%%%%%%%
%%%%%%%%%%%%%%%%%%%%%
%%%%%%%%%%%%%%%%%%%%%
\section{The model}
\label{sec:model}
In order to facilitate the work, we closely follow the model of \cite{ArinoMilliken2022a}.
The population is divided into six epidemiological compartments: \emph{susceptible} individuals $S$, \emph{latently infected} individuals $L$ who are incubating with the disease, \emph{symptomatically infectious} individuals $I$, \emph{asymptomatically infectious} individuals $A$, recovered individuals $R$ and \emph{vaccinated} individuals $V$.
Two additional compartments $D_V$ and $D_T$ represent the number of vaccine and treatment \emph{doses} available, respectively.
Transitions between compartments are illustrated in Figure~\ref{fig:flow-diagram}.
Parameters are summarised in Table~\ref{tab:parameter_values}.

\begin{figure}[htbp]
    \centering
    % Define skips so things can be adjusted quickly
    \def\hhskip{*3}
    \def\vvskip{*3}
    \begin{tikzpicture}[auto,
        scale=0.8, transform shape,
        % 1. Define a style for the parameter boxes here
        param/.style={draw, fill=white, rectangle, inner sep=2pt, align=center},
        cloud/.style={minimum width={width("N-1")+2pt}, draw, ellipse, fill=gray!20},
        line/.style={draw, -latex'}]        
    %%% Vaccine doses
    \node [cloud, fill=blue!50] at (-1.5\hhskip,0.5\vvskip) (D_V) {$D_V$};
    %%% Treatment doses
    \node [cloud, fill=blue!50] at (4\hhskip,0.5\vvskip) (D_T) {$D_T$};
    %%% States
    \node [cloud, fill=green!50] at (0\hhskip,1\vvskip) (S) {$S$};
    \node [cloud, fill=yellow!50] at (1\hhskip,0\vvskip) (L) {$L$};
    \node [cloud, fill=red!50] at (2\hhskip,1\vvskip) (I) {$I$};
    \node [cloud, fill=red!50] at (2\hhskip,-1\vvskip) (A) {$A$};
    \node [cloud, fill=red!50] at (3\hhskip,0\vvskip) (R) {$R$};
    \node [cloud, fill=blue!50] at (0\hhskip,-1\vvskip) (V) {$V$};
    % Inflow and outflow vaccines
    \node [cloud, above=0.5\hhskip of D_V, draw=none, fill=none] (inflowD) {};
    \node [cloud, below=1.75\hhskip of D_V, draw=none, fill=none] (outflowD) {};
    % Inflow and outflow treatment
    \node [cloud, above=0.5\hhskip of D_T, draw=none, fill=none] (inflowT) {};
    \node [cloud, below=1.75\hhskip of D_T, draw=none, fill=none] (outflowT) {};
    % Births and deaths
    \node [cloud, left=0.8\hhskip of S, draw=none, fill=none] (birthS) {};
    \node [cloud, left=0.8\hhskip of V, draw=none, fill=none] (birthV) {};
    \node [cloud, right=0.3\hhskip of S, draw=none, fill=none] (deathS) {};
    \node [cloud, right=0.3\hhskip of L, draw=none, fill=none] (deathL) {};
    \node [cloud, above=0.3\vvskip of I, draw=none, fill=none] (deathI) {};
    \node [cloud, below=0.3\vvskip of A, draw=none, fill=none] (deathA) {};
    \node [cloud, below=0.3\vvskip of R, draw=none, fill=none] (deathR) {};
    \node [cloud, below=0.3\vvskip of V, draw=none, fill=none] (deathV) {};
    %% Flows
    %%% Inflow/outflow vaccines
    \path [line, very thick] (inflowD) to node [midway, above, sloped, allow upside down] {$a_V(D_V)$} (D_V);
    \path [line, very thick] (D_V) to node [midway, above, sloped, allow upside down] {$p(D_V)b+v(D_V)S+\kappa_VD_V$} (outflowD);
    %%% Inflow/outflow treatment
    \path [line, very thick] (inflowT) to node [midway, below, sloped, allow upside down] {$a_T(D_T)$} (D_T);
    \path [line, very thick] (D_T) to node [midway, below, sloped, allow upside down] {$g(D_T)I+\kappa_TD_T$} (outflowT);
    %%% Births/deaths
    \path [line, very thick] (birthS) -- (S) node [midway, above, sloped] {$(1-p(D_V))b$};
    \path [line, very thick] (S) to node [midway, above] {$dS$} (deathS); 
    \path [line, very thick] (L) to node [midway, above] {$dL$} (deathL);
    \path [line, very thick] (I) to node [midway, right] {$(d+\delta )I$} (deathI);
    \path [line, very thick] (A) to node [midway, right] {$dA$} (deathA);
    \path [line, very thick] (R) to node [midway, right] {$dR$} (deathR);
    \path [line, very thick] (V) to node [midway, right] {$dV$} (deathV);
    \path [line, very thick] (birthV)--(V) node [very thin, midway, above=2pt, sloped, param, name=birthV_arrow] {$p(D_V)b$} (V);
    % Regular flows
    \path [line, very thick] (S) to node [midway, above, sloped] {$\beta S(I+\eta A)$} (L);
    \path [line, very thick] (L) to node [midway, above, sloped] {$\pi\varepsilon L$} (I);
    \path [line, very thick] (L) to node [midway, above, sloped] {$(1-\pi)\varepsilon L$} (A);
    \path [line, very thick] (I) to node [midway, below, sloped] {$\gamma I$} (R);
    \path [line, very thick] (A) to node [midway, above, sloped] {$\gamma A$} (R);
    \draw [>=latex,->, very thick, rounded corners] (R) -- (3\hhskip,1.5\vvskip) -- (0\hhskip,1.5\vvskip) node[sloped, midway, below] {$\omega_r R$} -- (S);
    % Vaccination flows
    \path [line, very thick] (S) to [bend right=15] 
        coordinate [midway, name=curve_point] 
        node [very thin, midway, below=2pt, sloped, allow upside down, param, name=vaccS_label] {$v(D_V)S$} (V);
    \path [line, very thick, bend right=15] (V) to node [midway, above, sloped] {$\omega_v V$} (S);
    \path [line, very thick] (V) to node [midway, below, sloped] {$(1-\sigma)\beta V(I+\eta A)$} (L);
    % Vaccine flows individuals
    \draw [dotted, ->] (D_V) to (birthV_arrow);
    \draw [dotted, ->] (D_V) to (vaccS_label);
    % Treatment flows
    \path [line, very thick] (I) to 
        coordinate [midway, name=curve_point]
        node [very thin, midway, above=2pt, sloped, allow upside down, param, name=treatI_label] {$g(D_T)I$} (R);
    \draw [dotted, ->] (D_T) to (treatI_label);
    \end{tikzpicture}
    \caption{Flow diagram of the model. Dotted lines show the ``inhibiting'' role of vaccine and treatment doses.}
    \label{fig:flow-diagram}
\end{figure}

Recruitment in the population happens at the constant rate $b$.
We assume that there is no recruitment into the $L$ and $I$ compartments.
All compartments are subject to \emph{per capita} exit rates $d$.

Vaccination can happen at two different stages. 
A fraction $p(D_V)$ of newly recruited individuals are vaccinated upon joining the population, with the remaining $1-p(D_V)$ entering the susceptible compartment.
An ongoing vaccination program also targets susceptible individuals at the \emph{per capita} rate $v(D_V)$.
Functions $p$ and $v$ are from $\IR_+$ to $\IR_+$, are $C^1$ and nondecreasing.
Furthermore, we assume that 
\[
\lim_{D_V(t) \to 0} p(D_V(t)) = \lim_{D_V(t) \to 0} v(D_V(t)) = 0,
\]
that $p(0)=v(0)=0$ in the case that $D_V$ actually equals zero and that
\[
\lim_{D_V(t)\to\infty} p(D_V(t)) = p_{\max}\in[0,1] \text{ and } \lim_{D_V(t)\to\infty} v(D_V(t)) = v_{\max}<\infty.
\]
Regarding replenishment of vaccine doses, we use the $C^1$ function $a_V(D_V):\IR_+\to\IR_+$ to denote the rate at which doses are \emph{acquired}.
A natural assumption for $a(D_V)$ is that it is a decreasing function of the number of vaccine doses available: the maximum rate of dose acquisition occurs when no doses are available, it decreases after that.
We further assume that 
\[
\lim_{D_V(t)\to\infty}a_V(D_V(t))=0,
\]
i.e., no purchases are made if enough vaccine doses are available.
We repeatedly use the assumption that the number of doses is unlimited. This is explained in more detail in the text; for now, suffices to say that in this case, we assume $D_V=D_V^\star$ and denote $p^\star=p(D_V^\star)$ and $v^\star=v(D_V^\star)$ the corresponding proportion of newly recruited individuals and adults vaccinated, respectively.

Whether wholly susceptible or vaccinated, individuals become infected when they have an infecting contact with an infectious individual in compartment $I$. 
The rate at which contacts transform into infections is $\beta$.
We use mass action incidence and as a consequence, $\beta$ has units per individual per unit time.
The rate $\beta$ is further lowered for vaccinated individuals by the vaccine efficacy $\sigma\in[0,1]$, which acts on $\beta$ as $1-\sigma$.

Upon infections, individuals first go through an (exponentially distributed) incubation period of mean duration $1/\varepsilon$.
At the end of the incubation period, they enter the infectious compartments $I$ and $A$.
Asymptomatically infectious individuals have their infectiousness modified by the parameter $\eta$, typically taken to be in $[0,1]$.
Individuals with symptomatic infections are subject to disease induced death at the \emph{per capita} rate $\delta $ additionally to natural death.
Infectious individuals recover at the \emph{per capita} rate $\gamma$.
Symptomatically infectious individuals can also be treated at the \emph{per capita} rate $g(D_T)$, dependent on the number of treatment doses available.
Upon recovery or successful treatment, individuals move on to the recovered compartments, in which they are temporarily immune to reinfection.
They lose this immunity on average after $1/\omega_r$ time units.
The dynamics of treatment doses mimics that of vaccine doses, with $a_T(D_T)$ the rate of acquisition of new doses.
The functions $a_T$ and $g$ have the same properties as their counterparts for vaccination, with $\lim_{D_T(t)\to\infty}g(D_T(t))=g_{\max}<\infty$.

We assume that there is potentially loss of vaccine and treatment doses at the \emph{per capita} rates $\kappa_V$ and $\kappa_T$, respectively.
This models issues such as vaccine or treatment shelf-life or cold-chain deficiencies.

\begin{table}[htpb]
    \centering
    \begin{tabular}{clc}
        \toprule
        \textbf{Parameter} & \textbf{Description} & \textbf{Baseline value} \\
        \midrule
        $N$ & Total population & $100{,}000$ \\
        $1/d$ & Average lifetime & $70$ years \\
        $b$ & Birth rate & Computed ($dN$) \\
        $\beta$ & Transmission coefficient & Computed \\
        $\eta$ & Asymptomatic change in infectiousness (prop.) & $0.5$ \\
        $1/\varepsilon$ & Average latent period & $10$ days \\
        $\pi$ & Proportion of symptomatic cases (prop.) & $0.7$ \\
        $\delta$ & Disease-induced mortality rate & $0.01$ \\
        $1/\gamma$ & Average infectious period & $10$ days \\
        $1/\omega_v$ & Average vaccine-induced immune period & $200$ days \\
        $1/\omega_r$ & Average disease-induced immune period & 1 year \\
        $\sigma$ & Vaccine efficacy (prop.) & $0.7$ \\
        $1/\kappa_V$ & Average shelf-life for vaccine doses & 1 year \\
        $1/\kappa_T$ & Average shelf-life for treatment doses & 3 years\\
        $p$ & Proportion vaccinated at birth (prop.) (func.) & $0.8$ \\
        $v$ & Routine vaccination rate (func.) & $0.02$ \\
        $g$ & Treatment rate (func.) & $0.2$ \\
        % $a_{V,\max}$ & Maximum vaccine replenishment rate (baseline) & $0.0$ \\
        % $k_{aV}$ & Threshold steepness for vaccine replenishment & $0.1$ \\
        % $a_{T,\max}$ & Maximum treatment replenishment rate (baseline) & $0.0$ \\
        % $k_{aT}$ & Threshold steepness for treatment replenishment & $0.1$ \\
        \bottomrule
    \end{tabular}
    \caption{Model parameters and baseline values utilized for numerical simulations.
    Time units are days (parameters whose values are given in years are converted).
    Parameter units are \emph{per day}, except for fractions (indicated with (prop.) in the table) that are dimensionless and the birth rate that is \emph{individuals per day}.
    Functions are indicated with (func.) in the table; in the case of functions, it is the equilibrium value that is given in the table.}
    \label{tab:parameter_values}
\end{table}

The dynamics is then described by the following system of differential equations,
\begin{subequations}
   \label{sys:general-form}
   \begin{align}
        S' &= (1 - p(D_V)) b + \omega_rR+\omega_vV - \beta S(I+\eta A) - (v(D_V) + d) S
        \label{sys:general-form-dS} \\
        L' &= \beta (S+(1-\sigma) V)(I+\eta A)
        - (\varepsilon + d) L 
        \label{sys:general-form-dL} \\
        I' &= \pi\varepsilon L 
        -(\gamma+g(D_T)+\delta +d) I
        \label{sys:general-form-dI} \\
        A' &= (1-\pi)\varepsilon L - (\gamma+d) A
        \label{sys:general-form-dA} \\
        R' &= (\gamma+g(D_T))I+\gamma A-(\omega_r+d)R
        \label{sys:general-form-dR} \\
        V' &= p(D_V)b + v(D_V)S 
        -(1-\sigma)\beta V(I+\eta A) -(\omega_v+d)V
        \label{sys:general-form-dV} \\
        D_V' &= a_V(D_V) - p(D_V) b - v(D_V) S -\kappa_VD_V
        \label{sys:general-form-dD} \\
        D_T' &= a_T(D_T) - g(D_T)I -\kappa_TD_T.
        \label{sys:general-form-dT}
    \end{align}
\end{subequations}
System~\eqref{sys:general-form} is considered with nonnegative initial conditions.
To avoid a trivial case, we further assume that $L(0)+I(0)+A(0)>0$.

%%%%%%%%%%%%%%%%%%%%%
%%%%%%%%%%%%%%%%%%%%%
%%%%%%%%%%%%%%%%%%%%%
%%%%%%%%%%%%%%%%%%%%%
\section{Mathematical analysis}
\label{sec:math-analysis}
All proofs in this section as well as some of the less formal discussions can be found in Appendix~\ref{supp:math-analysis}.

%%%%%%%%%%%%%%%%%%%%%%%%%%%%%%%
%%%%%%%%%%%%%%%%%%%%%%%%%%%%%%%
\subsection{The model without intervention}
\label{subsec:isolated-no-vacc}

The following model without any vaccination and treatment, i.e., without intervention, is important in and of itself as the baseline model. 
We see later (Section~\ref{subsec:analysis-nodose-novacc}) that it also serves in other circumstances.
\begin{subequations}
   \label{sys:no-intervention}
   \begin{align}
        S' &= b + \omega_rR - \beta S(I+\eta A) - d S
        \label{sys:no-vacc-dS} \\
        L' &= \beta S(I+\eta A)
        - (\varepsilon + d) L 
        \label{sys:no-vacc-dL} \\
        I' &= \pi\varepsilon L 
        -(\gamma+\delta +d) I
        \label{sys:no-vacc-dI} \\
        A' &= (1-\pi)\varepsilon L - (\gamma+d) A
        \label{sys:no-vacc-dA} \\
        R' &= \gamma(I+A)-(\omega_r+d)R.
        \label{sys:no-vacc-dR} 
    \end{align}
\end{subequations}

Model \eqref{sys:no-intervention} is classic, so we briefly summarise its properties here.
There are two equilibria, the disease-free equilibrium
\begin{equation}
    \bE_{0}^\eqref{sys:no-intervention}
    :=(S,L,I,A,R)=
    \left(\frac{b}{d},0,0,0,0\right)
\end{equation}
and the endemic equilibrium $\bE_{\star}^\eqref{sys:no-intervention}=(S^\star,L^\star,I^\star,A^\star,R^\star)$ taking the form
\begin{equation}\label{eq:EE-no-intervention}
\bE_{\star}^\eqref{sys:no-intervention}
    :=\left( 
    \frac{b}{d \R_0}, 
    \frac{\gamma + \delta + d}{\pi \varepsilon} I^\star, 
    I^\star, 
    \frac{(1 - \pi)(\gamma + \delta + d)}{\pi (\gamma + d)} I^\star, 
    \frac{\gamma \Theta}{\omega_r + d} I^\star 
    \right),
\end{equation}
where
\[
    I^\star = \frac{b}{\frac{(\varepsilon + d)(\gamma + \delta + d)}{\pi \varepsilon} - \frac{\omega_r \gamma \Theta}{\omega_r + d}}\
    \frac{\R_0 - 1}{\R_0},
    \quad
    \Theta = 1+\frac{(1 - \pi)(\gamma + \delta + d)}{\pi(\gamma + d)}
\]
and the basic reproduction number $\R_0$ is given below in \eqref{eq:R0-no-vaccination}.

Here and elsewhere, we use the equation number of the system under consideration as a superscript to allow to easily understand what system the quantity is referring to.
However, $\R_0$ is the basic reproduction number for the disease without intervention and is therefore also ``shared'' by all special cases.
As a consequence, it is not superscripted with \eqref{sys:no-intervention}. 

Following the reasoning in Appendix~\ref{supp:math-analysis-proof-no-intervention}, we find the basic reproduction number as
\begin{equation}\label{eq:R0-no-vaccination}
    \R_0 =
    \frac{\beta\varepsilon}{\varepsilon+d}
    \left(
    \frac{\pi}{\gamma + \delta + d} + 
    \frac{ (1-\pi)\eta}{\gamma + d}
    \right)
    \frac bd. 
\end{equation}
The following threshold result then holds.
\begin{proposition}\label{prop:isolated-novacc-GAS}
    If $\R_0<1$, then the disease-free equilibrium $\bE_{0}^\eqref{sys:no-intervention}$ is globally asymptotically stable in $\IR_+^5$ and the endemic equilibrium $\bE_{\star}^\eqref{sys:no-intervention}$ is not biologically relevant. 
    If $\R_0>1$, then the disease-free equilibrium $\bE_{0}^\eqref{sys:no-intervention}$ is unstable and the endemic equilibrium $\bE_{\star}^\eqref{sys:no-intervention}$ is biologically relevant.
\end{proposition}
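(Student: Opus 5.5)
We will use the next-generation matrix method of van den Driessche and Watmough for the local part, and a comparison argument for global stability of the disease-free equilibrium.

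\emph{Invariance and boundedness.} First we show that $\IR_+^5$ is positively invariant: each component's derivative is nonnegative when that component is zero and the others are nonnegative. Summing the equations gives $N'=b-dN-\delta I\le b-dN$, so $\limsup_{t\to\infty} N(t)\le b/d$. The set $\Omega=\{N\le b/d\}$ is therefore positively invariant and attracting. In particular $\limsup_{t\to\infty} S(t)\le b/d$.

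\emph{Local analysis.} Take the infected variables $(L,I,A)$. The new-infection vector is $\mathcal F=(\beta S(I+\eta A),0,0)^T$ and the transition vector is $\mathcal V=((\varepsilon+d)L,\,-\pi\varepsilon L+(\gamma+\delta+d)I,\,-(1-\pi)\varepsilon L+(\gamma+d)A)^T$. At $\bE_0^{\eqref{sys:no-intervention}}$ the matrix $F$ has a single nonzero row, $(0,\beta b/d,\eta\beta b/d)$. Since $V$ is lower triangular, $V^{-1}$ is immediate to compute. The product $FV^{-1}$ has rank one, and its only nonzero eigenvalue is exactly \eqref{eq:R0-no-vaccination}. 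The theorem of van den Driessche and Watmough then gives local asymptotic stability of the disease-free equilibrium when $\R_0<1$ and instability when $\R_0>1$. This settles the instability claim.

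\emph{Global stability for $\R_0<1$.} Fix $\epsilon>0$. For $t$ large we have $S\le b/d+\epsilon$, so the infected subsystem satisfies the linear differential inequality $x'\le (F_\epsilon-V)x$ with $x=(L,I,A)$. Here $F_\epsilon$ is $F$ with $b/d$ replaced by $b/d+\epsilon$. The matrix $F_\epsilon-V$ is Metzler, and $\rho(F_\epsilon V^{-1})<1$ for $\epsilon$ small by continuity. Hence $F_\epsilon-V$ has spectral abscissa $s(F_\epsilon-V)<0$, the standard equivalence for M-matrices. By the comparison theorem for cooperative linear systems, $(L,I,A)\to0$.

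It remains to recover $S$ and $R$. Once $I,A\to0$, the equation for $R$ is asymptotically $R'=-(\omega_r+d)R+o(1)$, so $R\to0$. The equation for $S$ is then asymptotically autonomous, $S'=b-dS+o(1)$, so $S\to b/d$. We would justify this rigorously with the theory of asymptotically autonomous systems (Thieme), or by elementary $\limsup/\liminf$ estimates. Combined with local stability, this gives global asymptotic stability on $\IR_+^5$.

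\emph{Endemic equilibrium.} Set the derivatives to zero with $I\ne0$. The $L$, $I$, $A$ equations force $\beta S^\star(\cdots)=\ldots$, which yields $S^\star=b/(d\R_0)$. The $L$, $A$, $R$ expressions are proportional to $I^\star$ as in \eqref{eq:EE-no-intervention}. Substituting into the $S$ equation then gives the stated formula for $I^\star$.

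For biological relevance we need to sign the denominator of $I^\star$. Since $\Theta=(\gamma+\delta+d)/(\pi\varepsilon)\cdot\varepsilon(\ldots)$, we check that
\[
\frac{(\varepsilon+d)(\gamma+\delta+d)}{\pi\varepsilon}>\frac{\gamma(\gamma+\delta+d)}{\pi\varepsilon}\cdot\frac{\varepsilon}{\gamma+d}\cdots\ge\frac{\omega_r\gamma\Theta}{\omega_r+d}.
\]
This holds because $\omega_r/(\omega_r+d)<1$, $\gamma<\gamma+d$ and $\varepsilon<\varepsilon+d$, so the denominator is positive. We expect this inequality to be the only delicate computation. A cleaner route is to use $N^\star=S^\star+L^\star+I^\star+A^\star+R^\star$ together with $dN^\star=b-\delta I^\star$, which expresses $I^\star$ through a manifestly positive quantity times $(1-1/\R_0)b$.

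It follows that $I^\star>0$, and hence all components are positive, if and only if $\R_0>1$. For $\R_0<1$ we get $I^\star<0$, so the endemic equilibrium is not biologically relevant.
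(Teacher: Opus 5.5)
Your proof is correct, but for global stability it takes a different route from the paper. The paper handles $\R_0<1$ with the linear Lyapunov function $\L=L+\frac{\beta S_0}{\gamma+\delta+d}I+\frac{\beta\eta S_0}{\gamma+d}A$, whose coefficients come from the first row of $FW^{-1}$. It computes $\L'=(\varepsilon+d)(\R_0-1)L+\beta(S-S_0)(I+\eta A)\le 0$ on $\Omega$ and concludes by LaSalle's invariance principle. You instead use the eventual bound $S\le b/d+\epsilon$ and compare $(L,I,A)$ with the Metzler system $x'=(F_\epsilon-V)x$, which has negative spectral abscissa. You then recover $R$ and $S$ by asymptotic autonomy and take local stability from the van den Driessche--Watmough theorem.

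The Lyapunov route is a one-line certificate and needs no $\epsilon$-margin. However, it uses $S\le S_0$ pointwise, which holds only in $\Omega$. It also leaves implicit the invariant-set analysis and the Lyapunov stability of the full equilibrium, since $\L$ involves only $L,I,A$. Your comparison argument needs only $\limsup S\le b/d$, so it applies verbatim to every initial condition in $\IR_+^5$. It also makes local stability and the instability for $\R_0>1$ explicit. The cost is invoking the M-matrix equivalence $\rho(F_\epsilon V^{-1})<1\iff s(F_\epsilon-V)<0$ and a Kamke-type comparison theorem.

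For the endemic equilibrium, your estimate is the paper's: bound $\omega_r/(\omega_r+d)$ and $\gamma/(\gamma+d)$ by $1$, and use $(\varepsilon+d)/\varepsilon>1$. Your displayed chain and your expression for $\Theta$ are garbled, though. Written cleanly, the estimate is $\gamma\Theta\le(\gamma+\delta+d)/\pi<(\varepsilon+d)(\gamma+\delta+d)/(\pi\varepsilon)$.

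Your alternative via $dN^\star=b-\delta I^\star$ is correct and neater. It gives $(dc+\delta)I^\star=b(1-1/\R_0)$, where $c>0$ is defined by $L^\star+I^\star+A^\star+R^\star=cI^\star$. One can check that $dc+\delta$ equals the paper's denominator $\frac{(\varepsilon+d)(\gamma+\delta+d)}{\pi\varepsilon}-\frac{\omega_r\gamma\Theta}{\omega_r+d}$, so its positivity needs no estimate at all.
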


See Section~\ref{supp:math-analysis-proof-no-intervention} in Supplemental Materials for the proof.
Note that the model without vaccination \eqref{sys:no-intervention} was studied in a slightly simpler case in \cite{lamichhane2015global}: that model did not have loss of immunity, i.e., a flow back to $S$ from $R$, so that global asymptotic stability of the endemic equilibrium could be claimed.

%%%%%%%%%%%%%%%%%%%%%%%%%%%%%%%%%%%%%%%%%%%%%%%%%%%%%%%%%%%%%%%%%%%%%%%%%%%%%%%%%%%%%%%%%%%

%%%%%%%%%%%%%%%%%%%%%%%%%%%%%%%
%%%%%%%%%%%%%%%%%%%%%%%%%%%%%%%
\subsection{Unreplenished doses leads to no-intervention scenario}
\label{subsec:analysis-nodose-novacc}
Next, let us consider the no-replenishment scenario, i.e., $a_V(D_V)\equiv a_T(D_T)\equiv 0$.
Recall that by assumption, \(p(\cdot)\), \(v(\cdot)\) and $g(\cdot)$ are $C^1$, decreasing and such that
\[
\lim_{D_V \to 0} p(D_V) = \lim_{D_V \to 0} v(D_V) = \lim_{D_T\to 0} g(D_T) = 0.
\]
Denote $\bE_{0}^\eqref{sys:general-form}:=(\bE_{0}^\eqref{sys:no-intervention},0,0,0)$ and $\bE_{b}^\eqref{sys:general-form}:=(\bE_{\star}^\eqref{sys:no-intervention},0,0,0)$ the vectors $\bE_{0}^\eqref{sys:no-intervention}$ and $\bE_{\star}^\eqref{sys:no-intervention}$ padded with three terminal zeros.
We call $\bE_{b}^\eqref{sys:general-form}$ a boundary equilibrium because although it lives in $\IR_+^8$ and is in practice an endemic equilibrium, its $V$, $D_V$ and $D_T$ components are zero.
However, it is not a boundary equilibrium in the usual sense.
The following result holds.

\begin{lemma}\label{lemma:nodose-novacc}
Assume that the vaccine and treatment supplies are never replenished, i.e., \(a_V(D_V) \equiv a_T(D_T)\equiv 0\).
Then regardless of the initial condition $D_V(0)\geq 0$ and $D_T(0)\geq 0$, \eqref{sys:general-form} limits to the system without intervention \eqref{sys:no-intervention} and has its global behaviour governed by $\R_0$ as given by \eqref{eq:R0-no-vaccination}, i.e., 
\begin{itemize}
    \item if $\R_0<1$, then the disease-free equilibrium $\bE_{0}^\eqref{sys:general-form}$ of \eqref{sys:general-form} is globally asymptotically stable and the boundary equilibrium $\bE_{b}^\eqref{sys:general-form}$ is not biologically relevant.
    \item if $\R_0>1$, then the disease-free equilibrium $\bE_{0}^\eqref{sys:general-form}$ and the boundary equilibrium $\bE_{b}^\eqref{sys:general-form}$ becomes biologically relevant.
\end{itemize}
\end{lemma}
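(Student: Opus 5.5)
The plan is to show that, with $a_V\equiv a_T\equiv 0$, the dose variables $D_V$ and $D_T$ decay to zero, and then use the theory of asymptotically autonomous systems to transfer Proposition~\ref{prop:isolated-novacc-GAS} from \eqref{sys:no-intervention} to \eqref{sys:general-form}.

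\emph{Step 1: boundedness and positivity.} First I will check that $\IR_+^8$ is positively invariant. On each face the relevant derivative is nonnegative; for instance, $D_V'=-p(0)b-v(0)S=0$ when $D_V=0$, since $p(0)=v(0)=0$, and similarly for $D_T$ using $g(0)=0$. Next, set $N=S+L+I+A+R+V$. Then $N'=b-dN-\delta I\le b-dN$, so $\limsup N\le b/d$ and all solutions are bounded.

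\emph{Step 2: the doses vanish.} From \eqref{sys:general-form-dD} with $a_V\equiv0$ we get $D_V'\le-\kappa_VD_V$, so $D_V(t)\le D_V(0)e^{-\kappa_Vt}\to0$ when $\kappa_V>0$. Likewise $D_T\to0$ exponentially. If $\kappa_V=0$, I would argue differently: $D_V$ is nonincreasing, hence converges to some $D_V^\infty\ge0$. If $D_V^\infty>0$, then $p(D_V)\ge p(D_V^\infty)$, and $D_V'\le -p(D_V^\infty)b$ forces $D_V\to-\infty$ whenever $p(D_V^\infty)>0$, which is a contradiction. This needs $p>0$ on $(0,\infty)$; otherwise I would fall back on assuming $\kappa_V>0$. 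By continuity, $p(D_V),v(D_V),g(D_T)\to0$, so the system is asymptotically autonomous.

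\emph{Step 3: $V$ vanishes and the limit system is reached.} Since $V'\le p(D_V)b+v(D_V)S-(\omega_v+d)V$ and the forcing terms tend to zero (with $S$ bounded), a variation-of-constants bound gives $V\to0$. The $(S,L,I,A,R)$ subsystem therefore has the form $\dot x=f(x)+h(t,x)$, where $f$ is the right-hand side of \eqref{sys:no-intervention} and $h\to0$ uniformly on the bounded absorbing set. By Markus/Thieme theory for asymptotically autonomous systems, the $\omega$-limit sets are invariant chain-transitive sets of \eqref{sys:no-intervention} inside $\{V=D_V=D_T=0\}$.

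\emph{Step 4: conclusion.} If $\R_0<1$, then by Proposition~\ref{prop:isolated-novacc-GAS} the point $\bE_0^{\eqref{sys:no-intervention}}$ is globally asymptotically stable, so it is the only compact invariant chain-transitive set; Thieme's theorem then yields convergence to $\bE_0^{\eqref{sys:general-form}}$. Local stability follows from the Jacobian at $\bE_0^{\eqref{sys:general-form}}$. It is block triangular, with the $V$, $D_V$, $D_T$ block contributing eigenvalues $-(\omega_v+d)$, $-\kappa_V-p'(0)b-v'(0)b/d$ and $-\kappa_T$, and the remaining block being that of \eqref{sys:no-intervention}. Also, $\bE_b$ has negative $I^\star$, so it is not biologically relevant. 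If $\R_0>1$, the same Jacobian structure shows that $\bE_0^{\eqref{sys:general-form}}$ is unstable, and $\bE_b^{\eqref{sys:general-form}}$ lies in $\IR_+^8$; checking that it is an equilibrium is immediate from $p(0)=v(0)=g(0)=0$.

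The main obstacle is Step 3 together with the $\R_0<1$ case of Step 4. Applying Thieme's result requires the limiting global stability to be of the right type, meaning there are no cycles of equilibria. For $\R_0<1$ this holds because there is only one equilibrium. For $\R_0>1$ I would claim only the local and relevance statements, as the lemma does. A secondary difficulty is the degenerate case $\kappa_V=0$ in Step 2.
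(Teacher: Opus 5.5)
Your proposal is correct and follows the same route as the paper's proof. With $a_V\equiv a_T\equiv 0$ the stocks decay through the $-\kappa_V D_V$ and $-\kappa_T D_T$ terms, so $p,v,g\to 0$ and then $V\to 0$, which reduces the dynamics to \eqref{sys:no-intervention} where Proposition~\ref{prop:isolated-novacc-GAS} applies. Your use of Markus--Thieme asymptotically autonomous theory and of the block-triangular Jacobian at $\bE_0^{\eqref{sys:general-form}}$ makes rigorous the transfer and local stability/instability steps that the paper compresses into ``limits to''. Note that, like the paper, you genuinely need $\kappa_T>0$: if $\kappa_T=0$, every point $(b/d,0,0,0,0,0,0,D_T)$ with $D_T\ge 0$ is an equilibrium, so the lemma itself fails and no analogue of your $\kappa_V=0$ workaround exists for the treatment stock.
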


See the proof in Appendix~\ref{supp:analysis-nodose-novacc}.

%%%%%%%%%%%%%%%%%%%%%%%%%%%%%%
%%%%%%%%%%%%%%%%%%%%%%%%%%%%%%
\subsection{Case of unlimited dose supply}
\label{subsec:model-unlimited-vaccine}
Let us now suppose that vaccine and treatment doses are replenished sufficiently frequently that interventions are never disrupted by supply shortages. 
Mathematically, this requires the dose stockpiles $D_V(t)$ and $D_T(t)$ to remain strictly positive. 
Since the susceptible and infected populations are bounded by the demographic carrying capacity ($S(t) \leq b/d$ and $I(t) \leq b/d$), the maximal instantaneous consumption rates of vaccine and treatment doses are also strictly bounded. 
Therefore, sufficient conditions for doses to never be completely depleted are the existence of some constants $\underline{D}_V > 0$ and $\underline{D}_T > 0$ such that the replenishment rates outpace the worst-case consumption and decay rates:
\begin{align}
    a_V(\underline{D}_V) &> p(\underline{D}_V) b + v(\underline{D}_V) \frac{b}{d} + \kappa_V \underline{D}_V, \label{eq:condition-unlimited-V} \\
    a_T(\underline{D}_T) &> g(\underline{D}_T) \frac{b}{d} + \kappa_T \underline{D}_T. \label{eq:condition-unlimited-T}
\end{align}
Assume that the functions $a_V, a_T, p, v,$ and $g$ are such that these inequalities hold. 
For simplicity, let us assume that the supply chains maintain the dose stockpiles at steady-state equilibria. 
We denote these equilibrium values as $D_V^\star$ and $D_T^\star$, yielding constant intervention rates $p^\star = p(D_V^\star)$, $v^\star = v(D_V^\star)$, and $g^\star = g(D_T^\star)$.

Then, \eqref{sys:general-form} reduces to
\begin{subequations}
   \label{sys:unlimited-vacc}
   \begin{align}
        S' &= (1 - p^\star) b + \omega_rR+\omega_vV - \beta S(I+\eta A) - (v^\star + d) S
        \label{sys:unlimited-vacc-dS} \\
        L' &= \beta (S+(1-\sigma) V)(I+\eta A) - (\varepsilon + d) L 
        \label{sys:unlimited-vacc-dL} \\
        I' &= \pi\varepsilon L -(\gamma+g^\star+\delta +d) I
        \label{sys:unlimited-vacc-dI} \\
        A' &= (1-\pi)\varepsilon L - (\gamma+d) A
        \label{sys:unlimited-vacc-dA} \\
        R' &= (\gamma+g^\star)I+\gamma A-(\omega_r+d)R
        \label{sys:unlimited-vacc-dR} \\
        V' &= p^\star b + v^\star S - (1-\sigma)\beta V(I+\eta A) -(\omega_v+d)V.
        \label{sys:unlimited-vacc-dV}
   \end{align}
\end{subequations}

System \eqref{sys:unlimited-vacc} is almost identical to the model in \cite{ArinoMilliken2022a}, save for the treatment term $g^\star$ that is absent in \cite{ArinoMilliken2022a} and the fact that we denote $\pi$ what is denoted $1-\pi$ there.
As a consequence, the analysis is very similar to that in \cite{ArinoMilliken2022a} and the following adapted results from \cite{ArinoMilliken2022a} are given here without proof.

First, the disease-free equilibrium of the system with unlimited doses \eqref{sys:unlimited-vacc} is 
\begin{equation}\label{eq:DFE-unlimited-vacc}
\bE_0^{\eqref{sys:unlimited-vacc}} = (S_0,0,0,0,0,V_0),
\end{equation}
where $S_0$ and $V_0$ are given by
\[
S_0 = \frac{(1-p^\star)d+\omega_v}{v^\star+\omega_v+d}\; \frac bd 
\quad\textrm{and}\quad
V_0 = \frac{p^\star d+v^\star}{v^\star+\omega_v+d}\;\frac bd.
\]
%%%%%%%%%%%%%%%%%%%%%%%%%%%%%%%%%%%%%%%%%%%%%%%%%%%%%%%%%%%%%%%%%%%%%%%%%%%%%%%%%%%%%%%%%%%%%%%%%%%%%%%%%%%%%%%%%%
In terms of the local asymptotic stability of this equilibrium, we have the following classical result.
\begin{proposition}\label{prop:unlimited-stability-DFE}
The intervention reproduction number for \eqref{sys:unlimited-vacc} is given by 
\begin{equation}\label{eq:Rv_SVLIARS}
\R_v^{\eqref{sys:unlimited-vacc}}=\frac{\lambda}{\varepsilon+d}\left(S_0+(1-\sigma)V_0\right),
\end{equation}
where $S_0$ and $V_0$ are given as in \eqref{eq:DFE-unlimited-vacc} and $\lambda$ is defined as
\begin{equation}\label{eq:lambda}
\lambda = \beta\varepsilon
\frac{(1-\pi)\eta(\gamma+g^\star+\delta+d) + \pi(\gamma + d)}
{(\gamma + d)(\gamma+g^\star+\delta+d)}>0.
\end{equation}
If $\R_v^{\eqref{sys:unlimited-vacc}}<1$, then the disease-free equilibrium $\bE_0^{\eqref{sys:unlimited-vacc}}$ of \eqref{sys:unlimited-vacc} is locally asymptotically stable, while it is unstable if $\R_v^{\eqref{sys:unlimited-vacc}}>1$.
\end{proposition}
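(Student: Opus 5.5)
We compute $\R_v^{\eqref{sys:unlimited-vacc}}$ with the next generation matrix method of van den Driessche and Watmough, and obtain the stability dichotomy from their threshold theorem. The infected compartments are $(L,I,A)$ and the uninfected ones are $(S,R,V)$. We split the infected dynamics as $\F-\mathcal{V}$, where new infections are
\[
\F=\left(\beta(S+(1-\sigma)V)(I+\eta A),0,0\right)^T
\]
and $\mathcal{V}$ collects all other transfers. Linearizing at $\bE_0^{\eqref{sys:unlimited-vacc}}$ gives
\[
F=\begin{pmatrix}0 & \beta N_0 & \beta\eta N_0\\ 0&0&0\\0&0&0\end{pmatrix},\qquad
V=\begin{pmatrix}\varepsilon+d&0&0\\ -\pi\varepsilon & \gamma+g^\star+\delta+d & 0\\ -(1-\pi)\varepsilon & 0 & \gamma+d\end{pmatrix},
\]
with $N_0:=S_0+(1-\sigma)V_0$. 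Here $V$ is a nonsingular M-matrix (lower triangular with positive diagonal), and $F\geq 0$.

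The matrix $FV^{-1}$ has rank one, since only its first row is nonzero. Its spectral radius is therefore its $(1,1)$ entry, which equals $\beta N_0$ times the sum of the $(2,1)$ entry and $\eta$ times the $(3,1)$ entry of $V^{-1}$. By lower-triangular inversion these entries are $\pi\varepsilon/((\varepsilon+d)(\gamma+g^\star+\delta+d))$ and $(1-\pi)\varepsilon/((\varepsilon+d)(\gamma+d))$. Combining the two fractions over a common denominator gives exactly $\lambda N_0/(\varepsilon+d)$ with $\lambda$ as in \eqref{eq:lambda}. This proves \eqref{eq:Rv_SVLIARS}, and positivity of $\lambda$ is immediate.

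It remains to verify the hypotheses (A1)–(A5) of the van den Driessche–Watmough theorem.

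(A1)–(A4) are the nonnegativity and structural conditions. They hold because $\F\ge0$, because outflows from a compartment vanish when it is empty, and because there is no infection when $L=I=A=0$.

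(A5) is the main point to check. We set $L=I=A=0$ and show that the reduced linear system for $(S,R,V)$ has $\bE_0$'s uninfected part as a locally asymptotically stable equilibrium. That system is
\[
S'=(1-p^\star)b+\omega_rR+\omega_vV-(v^\star+d)S,\quad R'=-(\omega_r+d)R,\quad V'=p^\star b+v^\star S-(\omega_v+d)V.
\]
Its Jacobian is block triangular once $R$ is separated. One eigenvalue is $-(\omega_r+d)$. The $(S,V)$ block has trace $-(v^\star+\omega_v+2d)<0$ and determinant $(v^\star+d)(\omega_v+d)-\omega_v v^\star=d(v^\star+\omega_v+d)>0$, so both of its eigenvalues have negative real part. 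Solving this linear system at steady state also confirms the formulas for $S_0,V_0$.

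With all hypotheses verified, the theorem gives local asymptotic stability of $\bE_0^{\eqref{sys:unlimited-vacc}}$ when $\R_v^{\eqref{sys:unlimited-vacc}}<1$ and instability when $\R_v^{\eqref{sys:unlimited-vacc}}>1$. The only real work is the algebra of the rank-one spectral radius and the check of (A5).
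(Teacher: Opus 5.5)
Your proof is correct and takes the route the paper implicitly relies on. The paper states this proposition without proof, deferring to Arino--Milliken, but its proof of Theorem~\ref{th:full-model-DFE-and-persistence} uses exactly your next-generation matrices $F$ and the transition matrix (called $W$ there, which also avoids a clash with the compartment $V$) at $\bE_0^{\eqref{sys:unlimited-vacc}}$. Your rank-one spectral-radius computation and your explicit check of (A5) fill in the steps the paper leaves implicit; for (A5) you use the eigenvalue $-(\omega_r+d)$ together with the trace and determinant $d(v^\star+\omega_v+d)>0$ of the $(S,V)$ block.
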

%%%%%%%%%%%%%%%%%%%%%%%%%%%%%%%%%%%%%%%%%%%%%%%%%%%%%%%%%%%%%%%%%%%%%%%%%%%%%%%%

Vaccine coverage can be evaluated at any point in state space, for instance at endemic equilibria, although a closed-form solution is rarely obtained. When the system is at the disease-free equilibrium, though, the following expression holds, in which the dependence on $\bE_0^{\eqref{sys:unlimited-vacc}}$ is indicated:
\begin{equation}\label{eq:v-c-DFE-unlimited-vacc}
v_c\left(\bE_0^{\eqref{sys:unlimited-vacc}}\right) 
:= \frac{V_0}{S_0+V_0}=\frac{p^\star d+v^\star}{v^\star+\omega_v+d}.
\end{equation}
This expression also provides a useful form of the nonzero components of the DFE as
\begin{equation}\label{eq:S0_V0_fct_vcE0}
S_0 = \left(1-v_c\left(\bE_0^{\eqref{sys:unlimited-vacc}}\right)\right)\bar S_0
\quad\text{and}\quad
V_0=v_c\left(\bE_0^{\eqref{sys:unlimited-vacc}}\right)\bar S_0,
\end{equation}
where $\bar S_0=b/d$ is the $S$ component of the disease-free equilibrium for the model without vaccination \eqref{sys:no-intervention}.
In the absence of intervention ($p^\star = v^\star = g^\star = 0$), the vaccination reproduction number $\R_v^{\eqref{sys:unlimited-vacc}}$ evaluates to $\R_0$ as given in \eqref{eq:R0-no-vaccination}.

As can be expected, the presence of imperfect and waning vaccination implies that bistable situations can arise.
While extensive numerical exploration suggests this bistable regime exists strictly outside the biologically plausible parameter space for the diseases considered here, we provide the formal conditions for a backward bifurcation in Appendix~\ref{supp:backward-bifurcation} for completeness.

%%%%%%%%%%%%%%%%%%%%%%%%%%%
%%%%%%%%%%%%%%%%%%%%%%%%%%%
\subsection{The full model \emph{at equilibrium} is typically unlimited}
\label{sec:analysis-full-model}
Having considered the particular cases of limited and unlimited doses, we now return to the full model \eqref{sys:general-form}.
To avoid the case of dose run-out already considered, we assume that $a_V(D_V)\not\equiv 0$ and $a_T(D_T)\not\equiv 0$, i.e., the curves $\{(D_V,a_V(D_V))\}$ and $\{(D_T,a_T(D_T))\}$ intersect the interior of the positive quadrant.

We begin by considering in Appendix~\ref{supp:analysis-full-model-on-invariant-set} the dynamics on the positively invariant set $\Omega_{L=I=A=0}^\eqref{sys:general-form}$. 
The system \eqref{sys:general-form-DFE} on this set involves $S$, $V$, $D_V$ and $D_T$ and the functions $a_V$, $a_T$, $p$ and $v$.
We show that this system has a unique equilibrium.

Consider then the full state space $\IR_+^8$, i.e., where initial conditions are not restricted to the invariant set $\Omega_{L=I=A=0}^\eqref{sys:general-form}$. 
There, the equilibrium of the disease-free subsystem \eqref{sys:general-form-DFE} naturally extends to the disease-free equilibrium of the full system \eqref{sys:general-form} by setting zero components for the infected ($L$, $I$, $A$) compartments and concluding that $R=0$ too.
Therefore, the disease-free equilibrium of \eqref{sys:general-form} takes the form
\begin{equation}\label{eq:DFE-full-system}
\bE_0^{\eqref{sys:general-form}} = (S^\star, 0, 0, 0, 0, V^\star, D_V^\star, D_T^\star),
\end{equation}
where $S^\star$ and $V^\star$ are the components $S_0$ and $V_0$ of the unlimited dose DFE \eqref{eq:DFE-unlimited-vacc}, evaluated with the steady-state vaccination parameters $p^\star = p(D_V^\star)$ and $v^\star = v(D_V^\star)$.
We thus have the following result about the general system \eqref{sys:general-form}.

\begin{theorem}\label{th:full-model-DFE-and-persistence}
    Assume that $a_V(D_V)\not\equiv 0$ and $a_T(D_T)\not\equiv 0$. 
    Then $\R_v^{\eqref{sys:general-form}}=\R_v^{\eqref{sys:unlimited-vacc}}$ as given by \eqref{eq:Rv_SVLIARS} and the following holds. 
    \begin{itemize}
        \item If $\R_v^{\eqref{sys:general-form}} < 1$, then the disease-free equilibrium $\bE_0^{\eqref{sys:general-form}}$ of \eqref{sys:general-form} is locally asymptotically stable. 
        \item If $\R_v^{\eqref{sys:general-form}} > 1$, then $\bE_0^{\eqref{sys:general-form}}$ is unstable and the disease is uniformly persistent, i.e., there exists a constant $c > 0$ such that any solution of \eqref{sys:general-form} with initial conditions satisfying $L(0)+I(0)+A(0)>0$ satisfies
        \begin{equation}
            \liminf_{t\to\infty} (L(t)+I(t)+A(t)) \geq c.
        \end{equation}
    \end{itemize}
\end{theorem}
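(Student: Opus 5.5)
The plan is to compute $\R_v^{\eqref{sys:general-form}}$ with the next generation method of van den Driessche and Watmough at $\bE_0^{\eqref{sys:general-form}}$, then deduce local stability from the block-triangular structure of the Jacobian, and finally obtain uniform persistence from the acyclicity/persistence theory of Smith and Zhao.

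The infected compartments are $(L,I,A)$. At the disease-free equilibrium we have $D_T=D_T^\star$, so $g(D_T^\star)=g^\star$. The new-infection matrix $F$ has only its first row nonzero, namely $\beta(S^\star+(1-\sigma)V^\star)(0,1,\eta)$. The transition matrix is
\[
W=\begin{pmatrix}\varepsilon+d&0&0\\-\pi\varepsilon&\gamma+g^\star+\delta+d&0\\-(1-\pi)\varepsilon&0&\gamma+d\end{pmatrix}.
\]
These are exactly the matrices obtained for \eqref{sys:unlimited-vacc}. The derivative of $g(D_T)I$ with respect to $D_T$ vanishes at $I=0$, so $D_T$ does not enter the infected block. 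Since $S^\star=S_0$ and $V^\star=V_0$, computing $\rho(FW^{-1})$ gives \eqref{eq:Rv_SVLIARS}. Hence $\R_v^{\eqref{sys:general-form}}=\R_v^{\eqref{sys:unlimited-vacc}}$.

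For local stability, order the variables as $(L,I,A)$ followed by $(S,R,V,D_V,D_T)$. At $\bE_0^{\eqref{sys:general-form}}$ the infected equations have zero partial derivatives with respect to the uninfected variables, so the Jacobian is block lower triangular. Its spectrum is therefore the union of the spectrum of $F-W$ and that of the Jacobian $J_0$ of the disease-free subsystem \eqref{sys:general-form-DFE} (together with $R$, which contributes the eigenvalue $-(\omega_r+d)$).

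\begin{itemize}
\item For $F-W$, the standard result gives $s(F-W)<0$ if and only if $\R_v<1$, and $s(F-W)>0$ if $\R_v>1$. Instability for $\R_v>1$ follows immediately.
\item For $J_0$, the $D_T$ equation decouples and has derivative $a_T'(D_T^\star)-\kappa_T<0$, because $a_T$ is decreasing.
\item The $(S,V,D_V)$ block is the main obstacle. I would use $N_0=S+V$, which satisfies $N_0'=b-dN_0$ and yields the eigenvalue $-d$. This reduces the block to $(S,D_V)$, whose $2\times2$ Jacobian has negative trace. Its determinant should be positive because $p'$, $v'\ge 0$, $a_V'\le0$, $\kappa_V>0$ and $S_0$ is decreasing in $v$ and $p$. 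I expect the sign check on this determinant to be the delicate computation.
\end{itemize}

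Uniqueness of the equilibrium from Appendix~\ref{supp:analysis-full-model-on-invariant-set} suggests this step goes through. Alternatively, a Lyapunov or monotonicity argument on that subsystem gives global attraction to $(S^\star,V^\star,D_V^\star,D_T^\star)$, which is also needed for the persistence step.

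For persistence when $\R_v>1$, I would apply Smith--Zhao (Theorem 4.6) with $X=\IR_+^8$ and $X_0=\{L+I+A>0\}$. The system is dissipative, since $N\le b/d$ asymptotically and $D_V,D_T$ are bounded by $\sup a/\kappa$, so a compact global attractor exists. $X_0$ is forward invariant because $L'\ge-(\varepsilon+d)L$ and the infection flows are positive. On $\partial X_0$ every orbit converges to $\bE_0^{\eqref{sys:general-form}}$ by the global attraction above, so $\{\bE_0\}$ is an isolated, acyclic covering.

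It remains to show $W^s(\bE_0)\cap X_0=\emptyset$. I would argue by contradiction. Suppose a solution in $X_0$ stays near $\bE_0$. Then $S$, $V$ and $D_T$ are close to $S^\star$, $V^\star$ and $D_T^\star$, and by continuity of $g$ the infected subsystem dominates $(F-W-\epsilon I)x$. Since $s(F-W)>0$, comparison with its Perron eigenvector forces $L+I+A$ to grow, which is the desired contradiction. Uniform weak persistence then upgrades to uniform persistence, giving the constant $c$.
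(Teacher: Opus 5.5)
Your proposal is correct and follows the same route as the paper: the same $F$ and $W$ at $\bE_0^{\eqref{sys:general-form}}$, then uniform persistence via Thieme/Smith--Zhao. It is in fact more careful, since the paper's appeal to Proposition~\ref{prop:unlimited-stability-DFE} never treats the $(S,V,D_V,D_T)$ block of the Jacobian, and uniqueness of the boundary equilibrium alone does not give the acyclic covering without the global attraction on $\{L=I=A=0\}$ that you flag. Both checks you left open go through: the $(S,D_V)$ determinant is $(v^\star+\omega_v+d)(\kappa_V-a_V'(D_V^\star))+(\omega_v+d)(p'(D_V^\star)b+v'(D_V^\star)S^\star)>0$, and the planar limit system on $S+V=b/d$ is cooperative after reversing $D_V$ and has negative divergence, so with its unique equilibrium every boundary orbit converges to $\bE_0^{\eqref{sys:general-form}}$.
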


See the proof in Appendix~\ref{supp:analysis-full-model-at-DFE}.

%%%%%%%%%%%%%%%%%%%%%%%%%%%
\subsubsection{The endemic equilibrium}
\label{sec:analysis-full-model-at-EE}

By analogous reasoning to the disease-free equilibrium, if the full system \eqref{sys:general-form} admits an endemic equilibrium $\bE_\star^{\eqref{sys:general-form}}=(S^\star, L^\star, I^\star, A^\star, R^\star, V^\star)$, it must match an endemic equilibrium of the unlimited doses system \eqref{sys:unlimited-vacc}, evaluated with the constant intervention parameters $p^\star = p(D_V^\star)$, $v^\star = v(D_V^\star)$, and $g^\star = g(D_T^\star)$.

However, unlike the disease-free equilibrium case where the dose equations could be decoupled from the infectious compartments, the components $S^\star$ and $I^\star$ now depend non-trivially on the intervention rates and thus on $D_V^\star$ and $D_T^\star$. 
To find the steady-state dose levels $D_V^\star$ and $D_T^\star$, one must solve the fully coupled system
\begin{align*}
    a_V(D_V^\star) &= p(D_V^\star)b + v(D_V^\star)S^\star(D_V^\star, D_T^\star) + \kappa_V D_V^\star, \\
    a_T(D_T^\star) &= g(D_T^\star)I^\star(D_V^\star, D_T^\star) + \kappa_T D_T^\star.
\end{align*}
Because the terms $v(D_V^\star)S^\star$ and $g(D_T^\star)I^\star$ represent the total incidence of vaccination and treatment at equilibrium, they are generally not monotonic functions of the available doses. 
For example, high treatment availability successfully reduces prevalence $I^\star$, which can potentially lower the overall treatment consumption $g(D_T^\star)I^\star$. 
Consequently, the above equations may admit multiple roots, indicating that dose limitations can theoretically induce multiple endemic equilibria and bi- or multistable dynamics even if the corresponding unlimited-dose system possesses a unique endemic equilibrium.

%%%%%%%%%%%%%%%%%%%%%
%%%%%%%%%%%%%%%%%%%%%
%%%%%%%%%%%%%%%%%%%%%
%%%%%%%%%%%%%%%%%%%%%
\section{Computational considerations}
\label{sec:computational}
The mathematical analysis in Section~\ref{sec:math-analysis} shows that in general, \eqref{sys:general-form} operates like much simpler systems that are either without intervention or not dose-limited.
This is, however, a consequence of the formulation as a system of ordinary differential equations, precluding proper disease extinction, as well as the analysis focusing on long-term dynamics.
Here, we investigate the system computationally.

\begin{table}[htbp]
    \centering
    \caption{Typical shelf life of common vaccines and therapeutic drugs.}
    \label{tab:drug-shelf-life}
    % Requires \usepackage{booktabs} and \usepackage{hyperref}
    \begin{tabular}{@{}llcl@{}}
        \toprule
        \textbf{Agent} & \textbf{Condition/Formulation} & \textbf{Shelf Life} & \textbf{Authority} \\
        \midrule
        \multicolumn{4}{@{}l}{\textit{Vaccines}} \\
        \midrule
        Influenza & Seasonal formulation (unopened) & $\sim$1 year & \href{https://www.fda.gov/vaccines-blood-biologics/vaccines/influenza-virus-vaccine-safety-availability}{FDA} \\
        \addlinespace
        MMR (Measles) & Lyophilized (unopened, refrigerated) & 1--2 years & \href{https://www.fda.gov/media/75191/download}{FDA} \\
        & Reconstituted (fridge, no light) & 8 hours & \\
        \addlinespace
        COVID-19 (mRNA) & Unopened (ultra-cold freezer) & 18--24 months & \href{https://www.cdc.gov/vaccines/covid-19/info-by-product/index.html}{CDC/FDA} \\
        & Thawed / Punctured (refrigerated) & 12 hrs--30 days & \\
        \midrule
        \multicolumn{4}{@{}l}{\textit{Therapeutics \& Antivirals}} \\
        \midrule
        Oseltamivir & Capsules (unopened, room temp) & 5--7 years & \href{https://www.fda.gov/drugs/emergency-preparedness-drugs/expiration-dating-extension}{FDA} \\
        (Tamiflu) & Oral suspension (reconstituted) & 10--17 days & \href{https://www.fda.gov/drugs/postmarket-drug-safety-information-patients-and-providers/tamiflu-oseltamivir-phosphate-information}{FDA} \\
        \addlinespace
        Paxlovid & Tablets (unopened, room temp) & 24 months & \href{https://www.fda.gov/emergency-preparedness-and-response/mcm-legal-regulatory-and-policy-framework/expiration-dating-extension}{FDA} \\
        \addlinespace
        Amoxicillin & Solid capsules (unopened, room temp) & 2--3 years & Standard \\
        & Oral liquid suspension (refrigerated) & 14 days & FDA \\
        \bottomrule
    \end{tabular}
\end{table}

Parameters are as listed in Table~\ref{tab:parameter_values}; for information, shelf-lives of some common vaccines and treatments are given in Table~\ref{tab:drug-shelf-life}.
Regarding functions, as a first approximation, we use logistic sigmoids for $p(D_V)$, $v(D_V)$ and $g(D_T)$, shifted so that $p(0)=v(0)=g(0)=0$. See the definition and Figure~\ref{fig:sigmoid-p} in Section~\ref{supp:computational}.

In Sections~\ref{sec:computational-unreplenished-doses}, \ref{subsec:numerics-unlimited-vaccine} and \ref{subsec:incorporation-of-dose-cost}, plots considering the reduction in cumulative incidence are obtained as follows.
For the $\R_0$ under consideration, a baseline simulation is run using no intervention and initial conditions the endemic equilibrium.
The total number of new infections $\int_0^T\beta S(t)I(t)dt$ and observed new infections $\int_0^T\pi\varepsilon L(t)dt$ is computed, for $T=1$ year and 5 years.
Then the different scenarios are run with the same initial conditions and the same integrals are computed, from which we derive the reduction in disease burden.
In Section~\ref{subsec:numerics-impulsive-doses}, the setup is the same, but instead of the endemic equilibrium, the initial condition is taken as $L(0)=5$ and other infected compartments empty.
This is done so that the reactive schedule delivery evaluated there can be triggered.

%%%%%%%%%%%%%%%%%%%%%
%%%%%%%%%%%%%%%%%%%%%
\subsection{Case of unreplenished doses}
\label{sec:computational-unreplenished-doses}
Lemma~\ref{lemma:nodose-novacc} implies that in the case of unreplenished doses, \eqref{sys:general-form} is not extremely interesting mathematically.
Indeed, the behaviour is governed by a reproduction number that does not take into account interventions, so that the number $D_V(0)$ and $D_T(0)$ of doses initially available plays no role in the long term dynamics.

\begin{figure}[htbp]
    \centering
    \begin{subfigure}{.49\textwidth}
        \includegraphics[width=\textwidth]{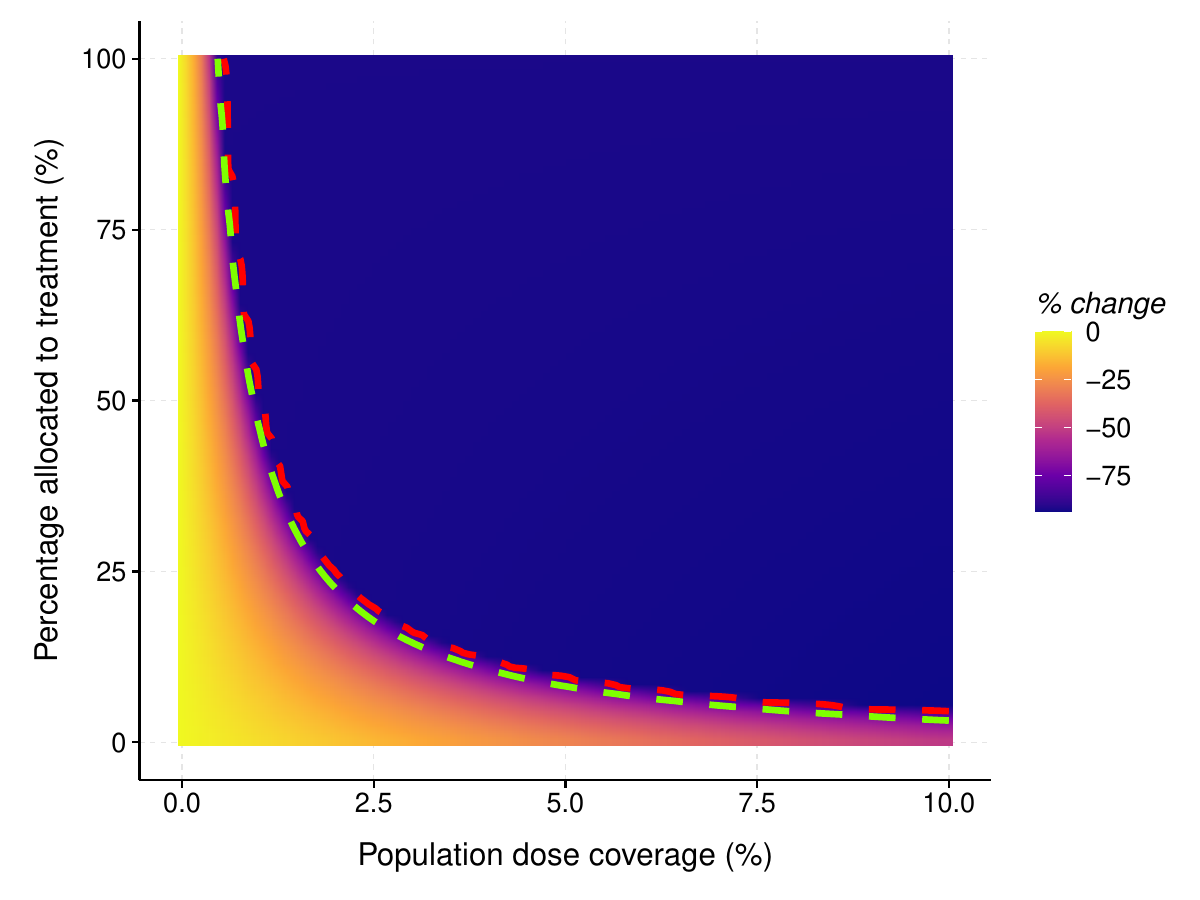}
        \caption{Percentage covered and treatment}
        \label{fig:change-AUC-I-vs-initial-doses-heatmap}
    \end{subfigure}
    \begin{subfigure}{.49\textwidth}
        \includegraphics[width=\textwidth]{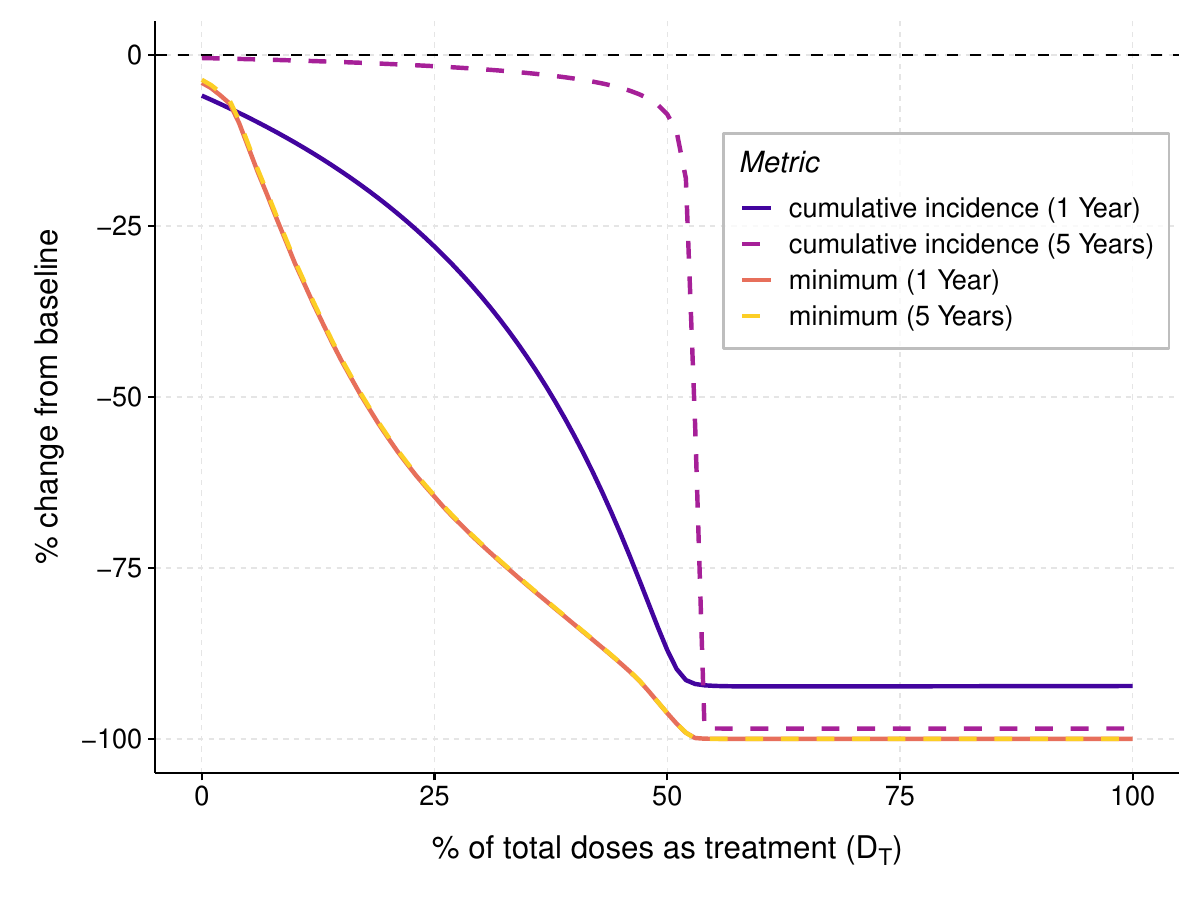}
        \caption{Function of percentage of treatment doses}
        \label{fig:change-AUC-I-vs-initial-doses-1d}
    \end{subfigure}
    \caption{Effect of the number of doses on various disease spread metrics when $\R_0=3$ with unreplenished doses. 
    (a) Percentage change in the total observed incidence over the first year compared to the case with no interventions, as a function of the percentage of the population can initially be covered and the percentage of these doses that are treatment doses.
    The red and green curves show a 90\% and 75\% reduction, respectively.
    (b) Percentage change in the total observed incidence and the value of the minimum of $I(t)$ over one and five years, as a function of the percentage of treatment doses among the total number of doses.}
    \label{fig:change-AUC-I-vs-initial-doses}
\end{figure}

Intuitively, though, interventions should play a role, at least until doses have run out, even when $\R_0>1$.
We therefore now explore this transient dynamics from a computational point of view.
To begin, let us confirm the intuition that the number of doses matters, even when $\R_0>1$.
To do so, we consider a baseline scenario consisting of \eqref{sys:no-intervention} with $\R_0=3$.
To focus on the effect of doses, we set the initial conditions using \eqref{eq:EE-no-intervention}, which is the endemic equilibrium the system with unreplenished doses tends to.
Thus, doses act essentially as a perturbation of the system away from its natural resting state.
We investigate this perturbation using two measures: the area under the $I(t)$ curve and the minimum value reached by $I(t)$. 
(We get the same type of results, not shown, when considering all infected variables instead of just $I$.)
In Figure~\ref{fig:change-AUC-I-vs-initial-doses}, we consider a population of 100,000 and $\R_0=3$.
With the parameters used here, the $I^\star$ component of the endemic equilibrium is about 24.
Figure~\ref{fig:change-AUC-I-vs-initial-doses-heatmap} shows how the area under the curve $I(t)$ changes with respect to baseline over a 1 year period, with a population dose coverage varied in [0, 10\%] and the percentage of treatment doses in that coverage varied from 0 to 100\%.
For this duration, the maximum effect is obtained by increasing treatment doses rather than vaccine doses.

To contrast this, Figure~\ref{fig:change-AUC-I-vs-initial-doses-1d} considers the precise effect of the repartition of doses between vaccine and treatment, i.e., a vertical cut across Figure~\ref{fig:change-AUC-I-vs-initial-doses-heatmap}.
However, in this figure, we also show what happens if instead of a 1-year time horizon, we use a 5-year time horizon.
Interestingly, the timing of the minimum disease prevalence does not change (because it happens within the first year), but the area under the $I(t)$ curve, which can be used as a measure of total prevalence over the time horizon used, shows a substantial difference.
To understand this, consider that the average time to vaccine waning used here is 3 years.

\begin{figure}[htbp]
    \centering
    \includegraphics[width=0.5\linewidth]{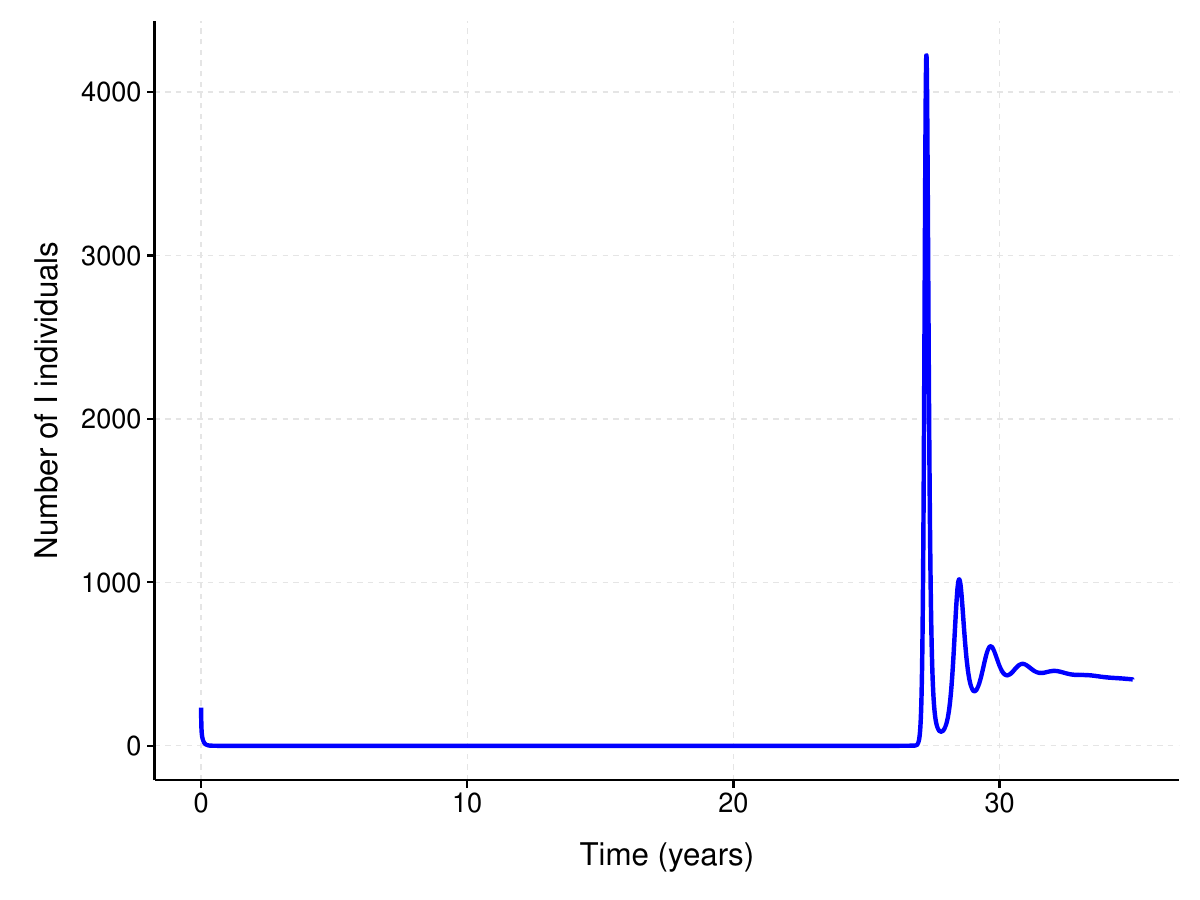}
    \caption{Time series of the number of infected individuals for the optimal repartition of doses between vaccine and treatment, when $\R_0=3$.
    Initial conditions are taken to be the endemic equilibrium of the system without intervention.}
    \label{fig:ODE-optimal-repartition-timeseries}
\end{figure}

Figure~\ref{fig:ODE-optimal-repartition-timeseries} shows an atto-fox \cite{mollison1991dependence}, i.e., a spurious deterministic disease rebound resulting from continuous fractional individual counts. 
Indeed, the $I$ prevalence curve goes very close to zero (the total prevalence $L+I+R$ also does but is not shown here), but because the set $L+I+A=0$ is positively invariant under the flow of \eqref{sys:no-intervention}, solutions outside that set can never enter it and because $\R_0=3>1$, they return to the endemic equilibrium after some time.

\begin{figure}[htbp]
    \centering
    \begin{subfigure}{.49\textwidth}
        \includegraphics[width=\textwidth]{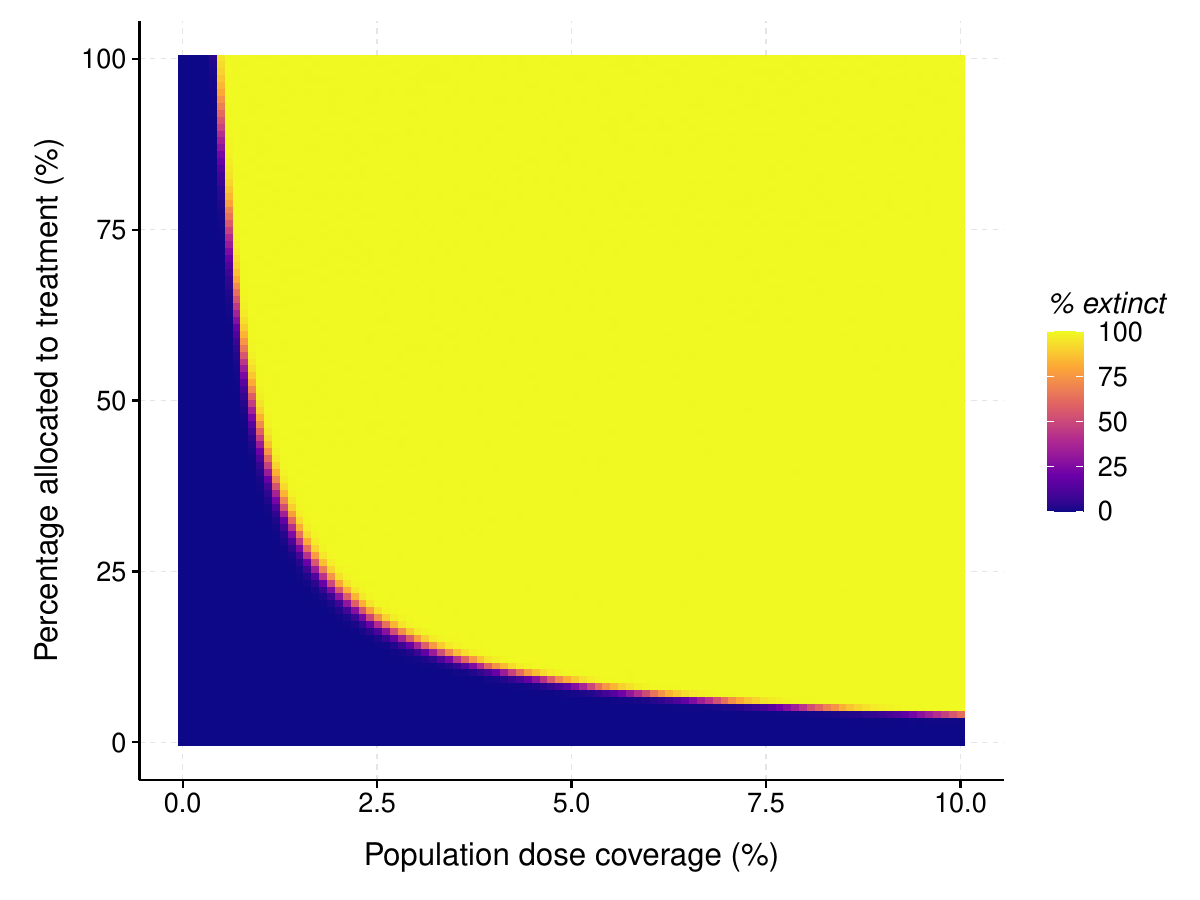}
        \caption{Percentage of extinctions.}
        \label{fig:CTMC-no-intervention-absorption-proba}
    \end{subfigure}
    \begin{subfigure}{.49\textwidth}
        \includegraphics[width=\textwidth]{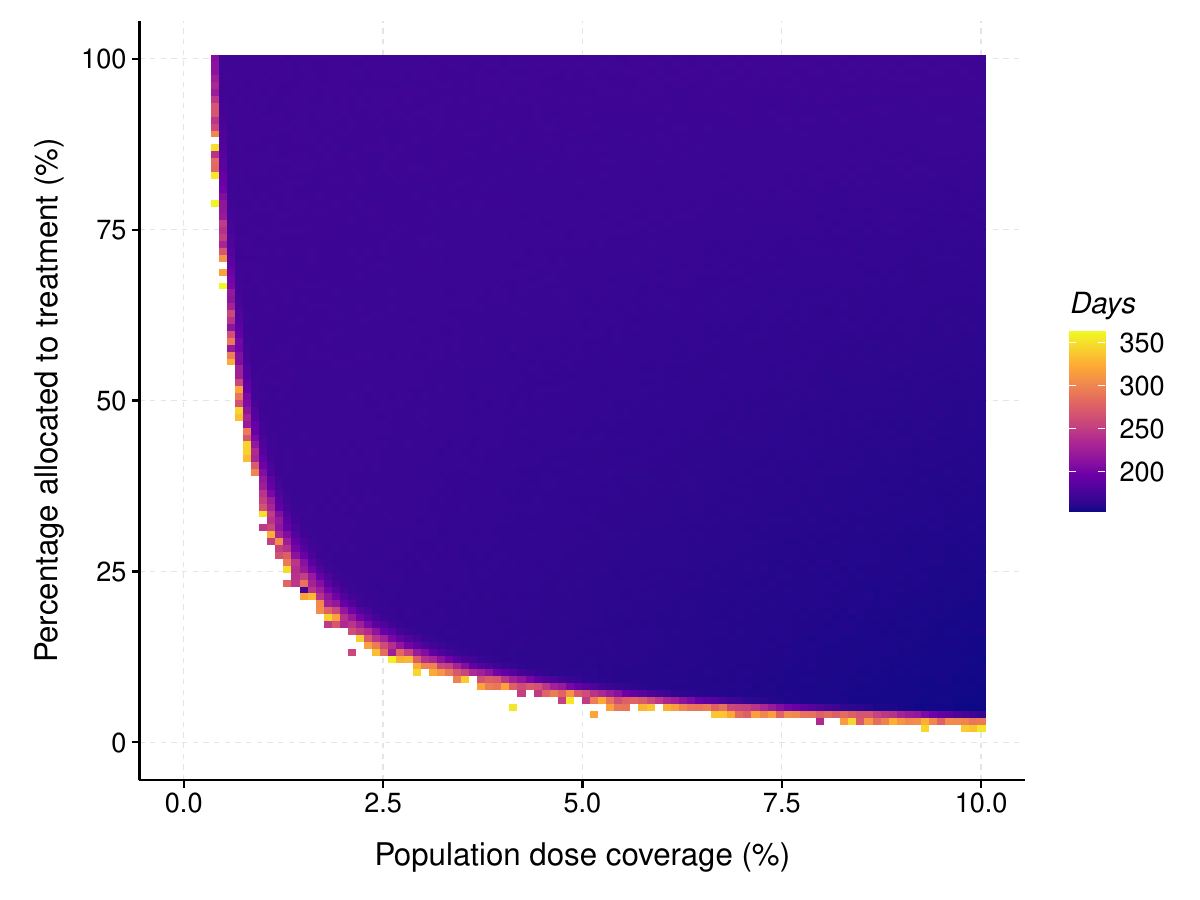}
        \caption{Time of disease eradication.}
        \label{fig:CTMC-no-intervention-absorption-timing}
    \end{subfigure}
    \caption{Role of the percentage of the population that can be covered by doses and the percentage of these doses that are treatment doses.
    Unreplenished doses scenario, $\R_0=3$, 10,000 simulations of the CTMC per point, time horizon of 1 year.
    (a) Percentage of the simulations that show extinction of the disease within one year.
    (b) Mean time to disease eradication over the course of one year. White regions are where the disease does not go extinct.}
    \label{fig:CTMC-no-intervention-absorption}
\end{figure}

While the deterministic system ``automatically'' rebounds towards the endemic equilibrium, this hints that a stochastic system might fare differently, since the states with $L=I=A=0$ form an absorbing class in the CTMC.
So we consider the continuous time Markov chain (CTMC) analogue of \eqref{sys:general-form} (Appendix~\ref{supp:CTMC}) and use it here in the present case of $a_V(D_V)\equiv a_T(D_T)\equiv 0$.
The CTMC has transitions in Table~\ref{tab:general-CTMC-transitions}.
Note that in the CTMC, we assume a sharp transition from intervention into non-intervention by simply assuming that $p(D_V)$, $v(D_V)$ and $g(D_T)$ take on values $p^\star$, $v^\star$ and $g^\star$ unless $D_V=0$ and $D_T=0$, respectively.

In Figure~\ref{fig:CTMC-no-intervention-absorption}, we use the same initial condition as in Figure~\ref{fig:change-AUC-I-vs-initial-doses} (rounded to the nearest integers) and the same parameters. 
Each pair of values $(D_V(0), D_T(0))$ is used for 10,000 realisations of the CTMC.
In Figure~\ref{fig:CTMC-no-intervention-absorption-proba}, we show the percentage of realisations leading to an extinction before one year, whereas in Figure~\ref{fig:CTMC-no-intervention-absorption-timing} we show the mean time to disease eradication over the course of one year.
So, clearly, having enough doses is a good way to bring the disease to extinction, especially at high values of the basic reproduction number, where the difference between starting with a large enough number of doses and not is very high.

%%%%%%%%%%%%%%%%%%%%%%%%%%%%%%
%%%%%%%%%%%%%%%%%%%%%%%%%%%%%%
\subsection{Case of unlimited dose supply}
\label{subsec:numerics-unlimited-vaccine}

\begin{figure}[htbp]
    \centering
    \begin{subfigure}{0.49\textwidth}
        \includegraphics[width=\linewidth]{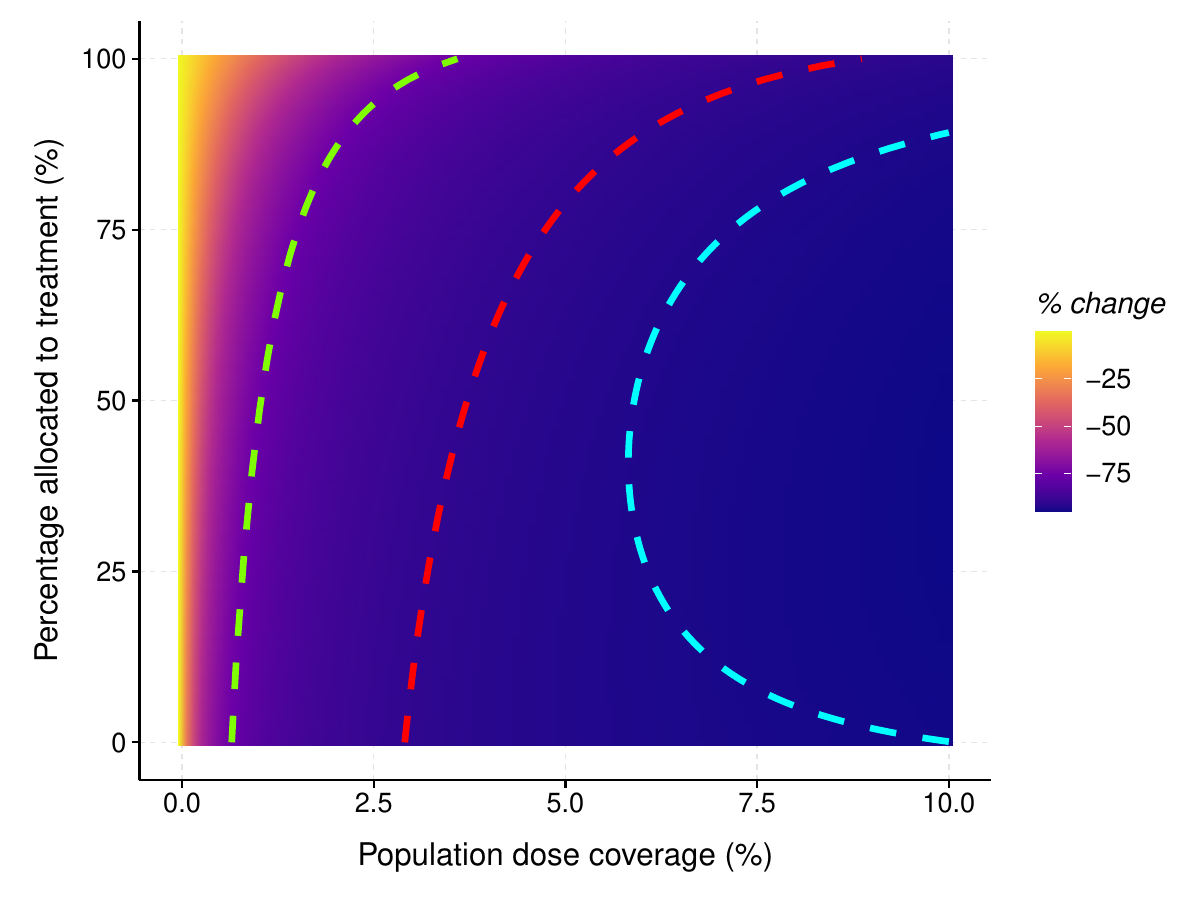}
        \caption{Reduction of cumulative incidence}
        \label{fig:coverage-vs-pcttreatment-unlimited-doses-AUC}
    \end{subfigure}
    \begin{subfigure}{0.49\textwidth}
        \includegraphics[width=\linewidth]{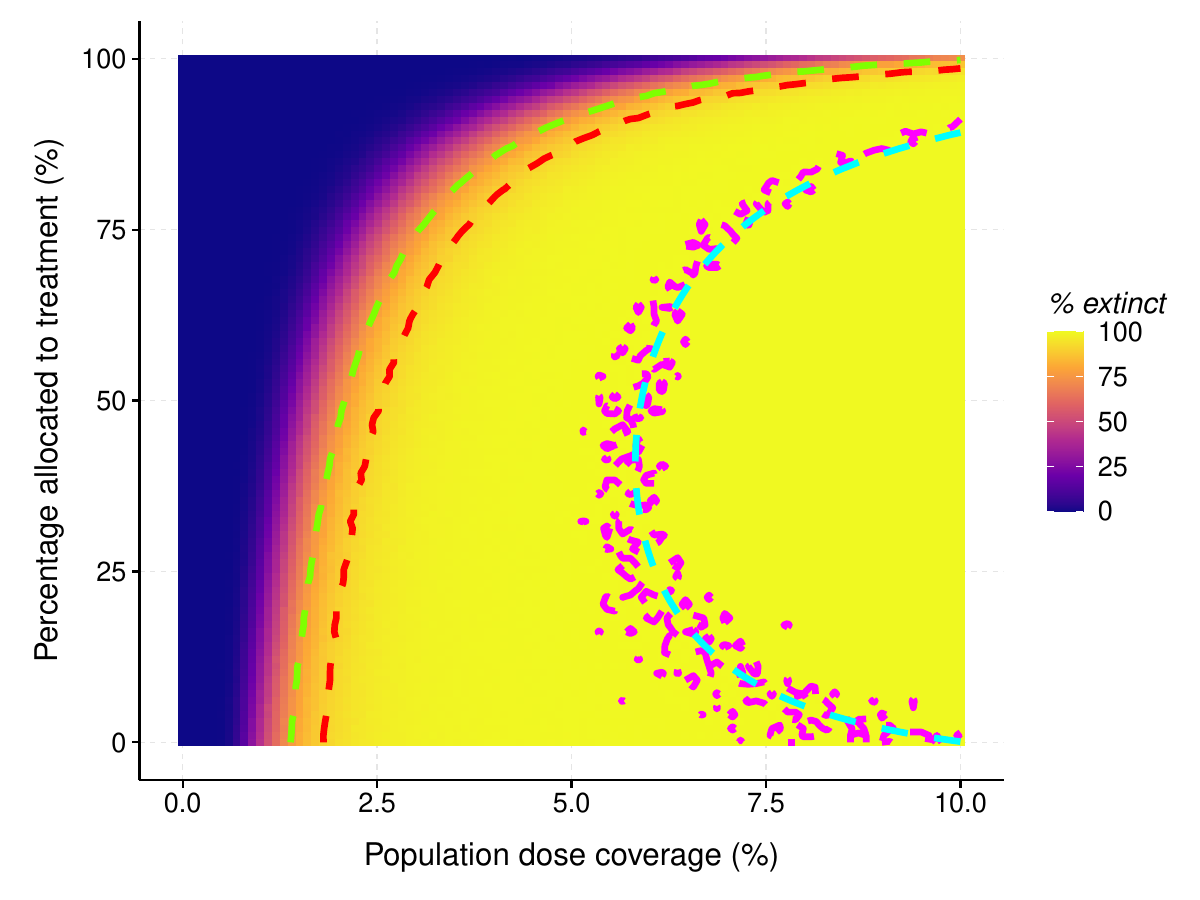}
        \caption{Percentage of extinctions}
        \label{fig:coverage-vs-pcttreatment-unlimited-doses-proba-extinction}
    \end{subfigure}
    \caption{Role of the percentage of the population that can be covered by doses and the percentage of these doses that are treatment doses. 
    Population of 100K individuals, with $\R_0=3.0$.
    Total simulation time 1 year.
    The cyan curve shows $\R_v^{\eqref{sys:unlimited-vacc}}=1$.
    (a) Percentage change in cumulative incidence compared to a no-intervention baseline.
    (b) Percentage of 10,000 realisations of the CTMC that see disease extinction within the year. 
    The red and green curves show, respectively, a 90\% and 75\% (a) reduction in cumulative incidence (b) extinction.
    The magenta ``curve'' in (b) shows the 100\% extinction region. (It is jagged because of numerical imprecisions.)}
    \label{fig:coverage-vs-pcttreatment-unlimited-doses}
\end{figure}

Figure~\ref{fig:coverage-vs-pcttreatment-unlimited-doses} shows that, quite similarly to the situation observed in Section~\ref{sec:computational-unreplenished-doses} with unreplenished doses, the mathematical analysis in the unlimited doses case only tells part of the story.
Although the situation is not quite as extreme here, we do note that while the curve $\R_v^{\eqref{sys:unlimited-vacc}}=1$ determines the theoretic threshold for elimination of the disease, there are regions where $\R_v^{\eqref{sys:unlimited-vacc}}>1$ that see quite a significant reduction in the overall impact of the disease, as evidenced by the green and red curves in Figure~\ref{fig:coverage-vs-pcttreatment-unlimited-doses-AUC}, which show, respectively, 75\% and 90\% reduction in symptomatic incidence.

%%%%%%%%%%%%%%%%%%%%%%%%%%%%%%
%%%%%%%%%%%%%%%%%%%%%%%%%%%%%%
\subsection{Impulsive intervention delivery}
\label{subsec:numerics-impulsive-doses}

In urgent epidemiological settings, relying on continuous intervention deployments or static initial stockpiles is often unrealistic. 
Resources such as COVAX batches or emergency interventions often arrive in discrete shipments. 
To encode this, epidemiological models frequently employ impulsive differential equations \cite{gao2006pulse,shulgin1998pulse}.

Here, we briefly investigate the consequences of such a type of dose replenishment.
The continuous ODE dynamics \eqref{sys:general-form} govern the system between shipments, while the discrete delivery of doses imposes instantaneous impulses on the available stockpiles $D_V(t)$ and $D_T(t)$,
\begin{subequations}
    \label{sys:impulses}
    \begin{align}
        D_V(t_k^+) &= D_V(t_k^-) + \Delta_V \\
        D_T(t_k^+) &= D_T(t_k^-) + \Delta_T.
    \end{align}
\end{subequations}
Since this formulation departs from the assumptions of continuity of $D_V$ and $D_T$ used in the remainder of the manuscript, Appendix~\ref{supp:impulsive-existence} briefly outlines the existence and uniqueness of solutions for this formulation.

Below, we denote $T_{\text{total}}$ the total number of doses available for some period of time (one year in the examples below) at some remote central storage location.
We evaluate two primary logistical regimens.
\begin{enumerate}
    \item {Fixed schedule delivery --} 
    Shipments arrive at pre-determined times.
    Here, the $t_k$ denote the $k$-th scheduled delivery time, i.e., $\{t_1,\ldots, t_{N_{\text{shipments}}}\}$ is a static schedule. At each $t_k$, $T_{\text{total}}/N_{\text{shipments}}$ doses are delivered.
    \item {Reactive schedule delivery --} 
    In this formulation, emergency stockpiles are deliberately held in reserve at the central location and deployed only if the observable incidence $\pi\varepsilon L$ exceeds a threshold $I_{\text{critical}}$.
    A fraction of the stockpile is then delivered, corresponding to $M$ months worth of doses. 
    If the observable incidence has not decreased below the threshold $I_{\text{critical}}$ after $M$ months, an additional delivery is made.
    Thus, we trigger a batch delivery at times $t_k$ satisfying one of the two conditions
    \begin{align*}
        \pi\varepsilon L(t_k^-) &= I_{\text{critical}} \text{ and } \frac{d}{dt}\pi\varepsilon L \Big|_{t=t_k^-} > 0, \\
        \pi\varepsilon L(t_k^-) &> I_{\text{critical}} \text{ and } t-t_{k-1}\geq M.
    \end{align*}
    This dynamical impulse sequence continues until the capacity of $T_{\text{total}}$ total deliveries is exhausted.
\end{enumerate}

\begin{figure}[htbp]
    \centering
    \begin{subfigure}[b]{0.49\textwidth}
        \centering
        \includegraphics[width=\textwidth]{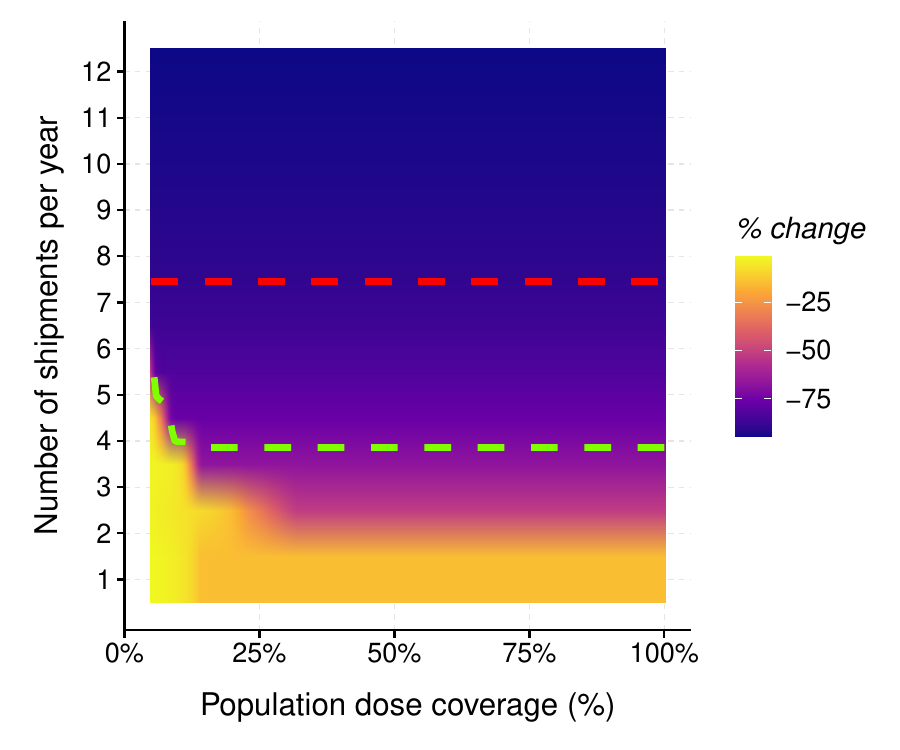}
        \caption{Fixed schedule}
        \label{fig:impulsive-heatmap-fixed}
    \end{subfigure}
    \hfill
    \begin{subfigure}[b]{0.49\textwidth}
        \centering
        \includegraphics[width=\textwidth]{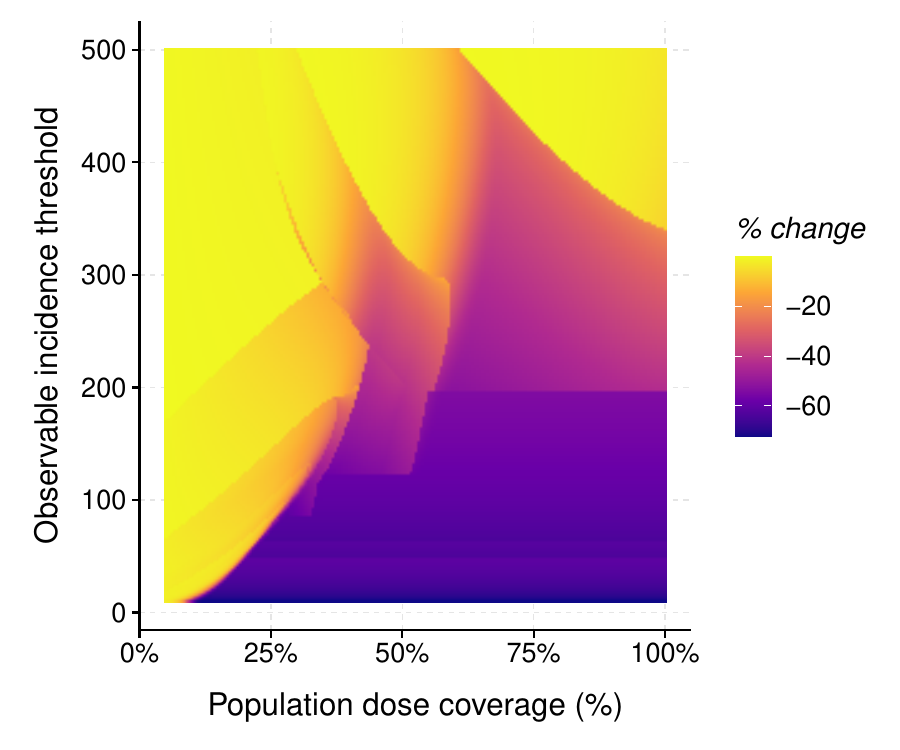}
        \caption{Reactive schedule (2 months)}
        \label{fig:impulsive-heatmap-reactive}
    \end{subfigure}
    \caption{Relative benefit (percentage change in cumulative detectable cases over 1 year compared to a scenario without intervention) when utilizing impulsive dose delivery under strict dose limitations. 
    For both cases, the total number of doses available over the time period is shown on the $x$-axis as a percentage of the total population and the red and green curves shows a 90\% and 75\% reduction, respectively.
    (a) Fixed delivery schedule, with $T_{\text{total}}/N_{\text{shipments}}$ of available doses delivered each of the $N_{\text{shipments}}$.
    (b) Reactive schedule, with 1/6 of available doses delivered when incidence crosses the threshold on the $y$-axis.}
    \label{fig:impulsive-heatmaps}
\end{figure}

Figure~\ref{fig:impulsive-heatmaps} explores impulsive strategies over a one year period.
To simplify, here only treatment doses are used, i.e., $D_V(0)=\Delta_V=0$.
In both figures, $D_T(0)=0$.
In the fixed shipment case, $\Delta_T=T_{\text{total}}/N_{\text{shipments}}$ doses are delivered $N_{\text{shipments}}$ times a year, with the year itself divided in $N_{\text{shipments}}+1$ segments.
In the reactive treatment case shown in Figure~\ref{fig:impulsive-heatmap-reactive}, two months worth of treatment doses ($\Delta_T=T_{\text{total}}/6$) are sent each time the reaction threshold $I_{\text{threshold}}$ is crossed or if the observable incidence $\pi\varepsilon L$ is still above the threshold two months after the last delivery.

Figure~\ref{fig:impulsive-heatmap-fixed} shows that delivering a fixed total volume of doses in smaller, more frequent batches results in far fewer total detectable cases compared to larger, less frequent deliveries.
Thus, a steady replenishment mitigates the risks of waning immunity and dose expiration that waste large, infrequent shipments.

For the reactive schedule (Figure~\ref{fig:impulsive-heatmap-reactive}), we observe that waiting for a high incidence threshold (top of the y-axis) yields very little change.
If authorities wait too long to trigger reactive shipments, the exponential momentum of the outbreak simply outpaces the discrete delivery batches.
The same is observed in Figure~\ref{fig:impulsive-deliveries-reactive-schedule} (Appendix~\ref{supp:more-figures}), which shows the impact of delivering one month (Figure~\ref{fig:impulsive-deliveries-reactive-schedule-1m}) and six months (Figure~\ref{fig:impulsive-deliveries-reactive-schedule-6m}) worth of doses upon crossing the threshold.
Note that delivering an entire year of doses at once when the threshold is crossed leads to the exact same situation as in  Figure~\ref{fig:impulsive-deliveries-reactive-schedule-6m} (not shown).

One interesting observation is that none of the reactive scenarios shown in Figures~\ref{fig:impulsive-heatmap-reactive}, \ref{fig:impulsive-deliveries-reactive-schedule-1m} and \ref{fig:impulsive-deliveries-reactive-schedule-6m} have even the 75\% reduction line. For instance, in the scenario of Figure~\ref{fig:impulsive-heatmap-reactive}, the maximum case reduction achieved is 72.2\%, whereas with 6 shipments per year in Figure~\ref{fig:impulsive-heatmap-fixed}, the maximum reduction is 86.5\%.
Altogether, it appears that a reactive schedule never achieves the same level of case suppression as a fixed one, though it achieves this suppression while consuming a significantly smaller fraction of the available stockpile.

%%%%%%%%%%%%%%%%%%%%%%%%%%%%%%
%%%%%%%%%%%%%%%%%%%%%%%%%%%%%%
%%%%%%%%%%%%%%%%%%%%%%%%%%%%%%
%%%%%%%%%%%%%%%%%%%%%%%%%%%%%%
\subsection{Incorporation of dose cost}
\label{subsec:incorporation-of-dose-cost}

Thus far, we have treated vaccination and treatment as having the same cost. 
However, in reality, the costs of vaccination ($c_V$) and treatment ($c_T$) are often very different.
For example, in the case of influenza, both interventions are direct pharmacological products (the seasonal vaccine versus a course of antivirals like oseltamivir), leading to a moderate cost asymmetry where $c_T$ is typically 1.5 to 5 times higher than $c_V$ \cite{molinari2007annual,rothberg2008cost}. 
Conversely, for a disease like measles in low- or middle-income countries, the cost asymmetry is extreme. 
Measles-rubella vaccines are highly subsidized and inexpensive to procure and deliver \cite{bishai2012cost}, while treatment requires clinical supportive care (and often hospitalization for complications), leading to estimates of $c_T$ being 10 to 100 times higher than $c_V$ \cite{mvundura2018economic,portnoy2021costs}.

To illustrate this, consider the case with unreplenished doses and a total budget $B$ allocated between vaccine and treatment doses according to
\[
B = c_V D_{V}(0) + c_T D_{T}(0).
\]

First, let us consider a similar situation to that in Figure~\ref{fig:change-AUC-I-vs-initial-doses-heatmap}, i.e., with unreplenished doses, but varying the total budget and percentage of doses devoted to treatment, varying $B$ and $c_TD_T(0)/(c_VD_V(0)+c_TD_T(0))$, i.e., taking into account the relative cost of treatment versus vaccination. 
Figure~\ref{fig:unlimited-budget-comparison-1yr-flu} shows the situation for influenza, with a low untreated $\R_0=1.5$ and low ratio of cost of treatment over cost of vaccination $c_T/c_V=5$.
Figure~\ref{fig:unlimited-budget-comparison-1yr-measles} shows a more extreme situation as would happen with measles, with $\R_0=15$ and $c_T/c_V=20$.

\begin{figure}[htbp]
    \centering
    \begin{subfigure}[b]{0.49\textwidth}
        \centering
        \includegraphics[width=\textwidth]{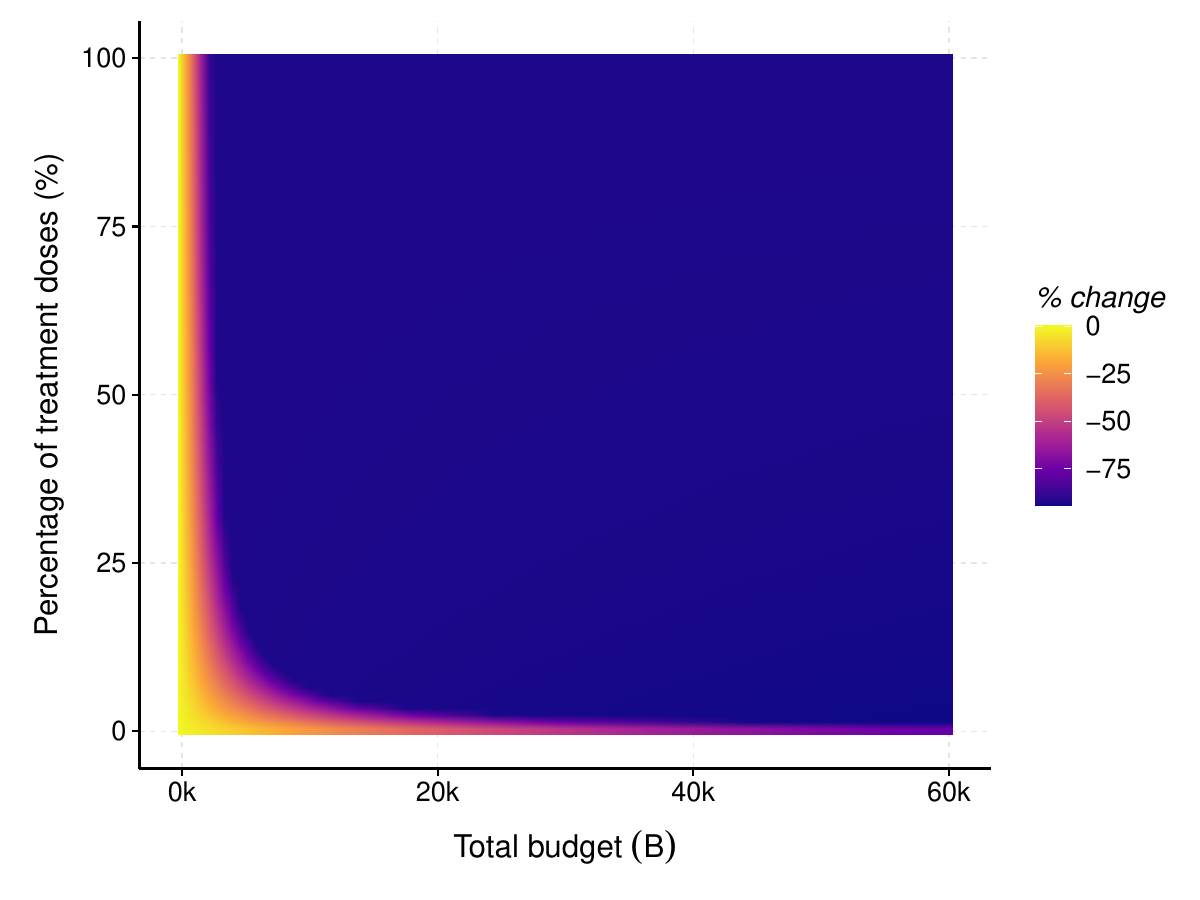}
        \caption{Influenza ($\R_0=1.5$, $c_V=2, c_T=10$)}
        \label{fig:unlimited-budget-comparison-1yr-flu}
    \end{subfigure}
    \hfill
    \begin{subfigure}[b]{0.49\textwidth}
        \centering
        \includegraphics[width=\textwidth]{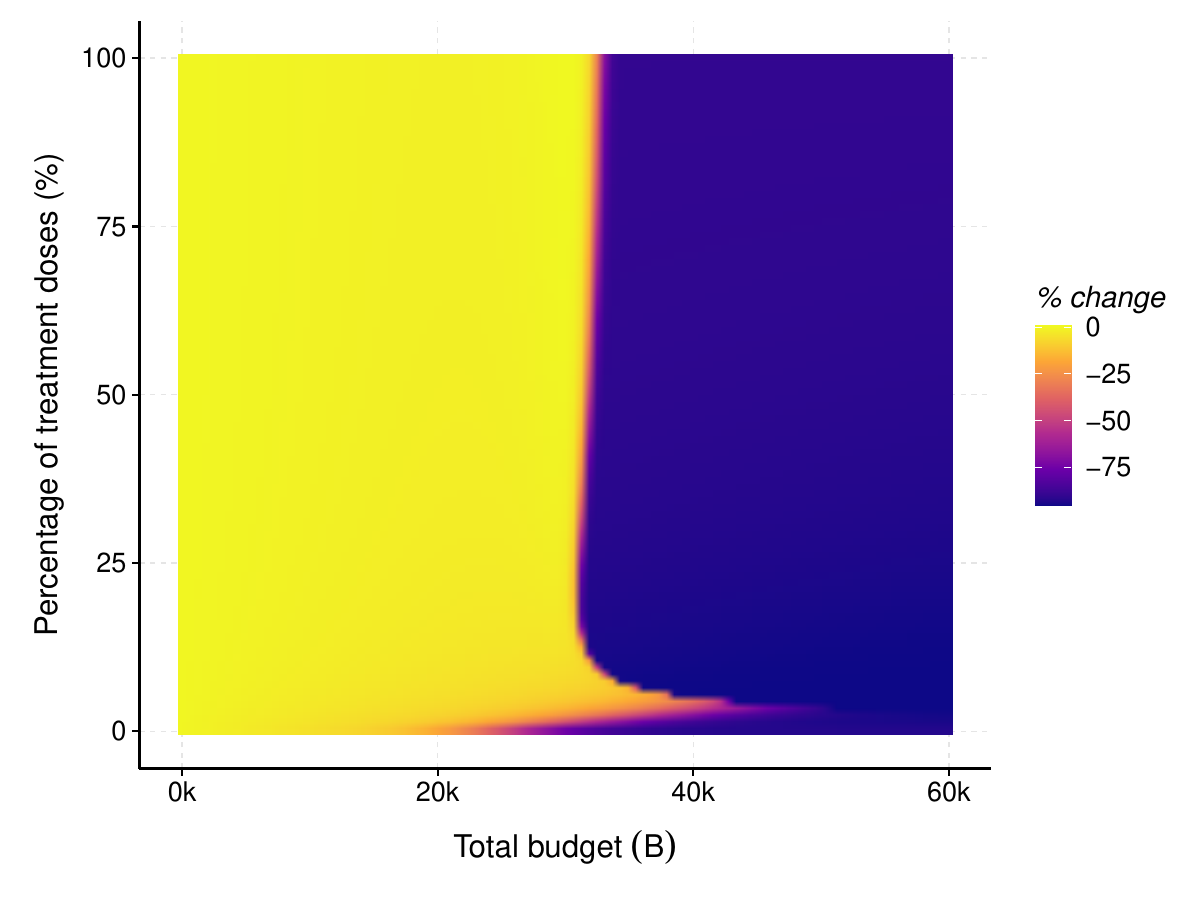}
        \caption{Measles ($\R_0=15$, $c_V=2, c_T=40$)}
        \label{fig:unlimited-budget-comparison-1yr-measles}
    \end{subfigure}
    \caption{Effect of total budget and repartition of that budget between vaccine and treatment doses in an unreplenished scenario for (a) a low $\R_0=1.5$ and $c_T/c_V=5$ as is typical for influenza and (b) high $\R_0=15$ and $c_T/c_V=20$ as is typical for measles.}
    \label{fig:unlimited-budget-comparison}
\end{figure}

Compare Figure~\ref{fig:change-AUC-I-vs-initial-doses-heatmap} and Figure~\ref{fig:unlimited-budget-comparison-1yr-flu}; we observe that the situation is quite similar, with that in the case of influenza being even simpler. 
This is expected, since we are using $\R_0=3$ in the former in Figure~\ref{fig:change-AUC-I-vs-initial-doses-heatmap} and a smaller $\R_0=1.5$ in Figure~\ref{fig:unlimited-budget-comparison-1yr-flu}.
In the case of measles (Figure~\ref{fig:unlimited-budget-comparison-1yr-measles}), the situation is very different.
First, we observe that the boundary between an uncontrolled disease (yellow region) and a perfectly controlled disease (dark blue) is very narrow.
Second, except in a small region where the behaviour is a little more complicated, the outcome is mostly driven by the budget. 
This is explained by the very high price of treatment compared to vaccination, meaning that allocating a limited budget mostly towards treatment doses leads to not having enough doses, especially in the face of a highly transmissible disease.

\begin{figure}[htbp]
    \centering
    \begin{subfigure}[b]{0.49\textwidth}
        \centering
        \includegraphics[width=\textwidth]{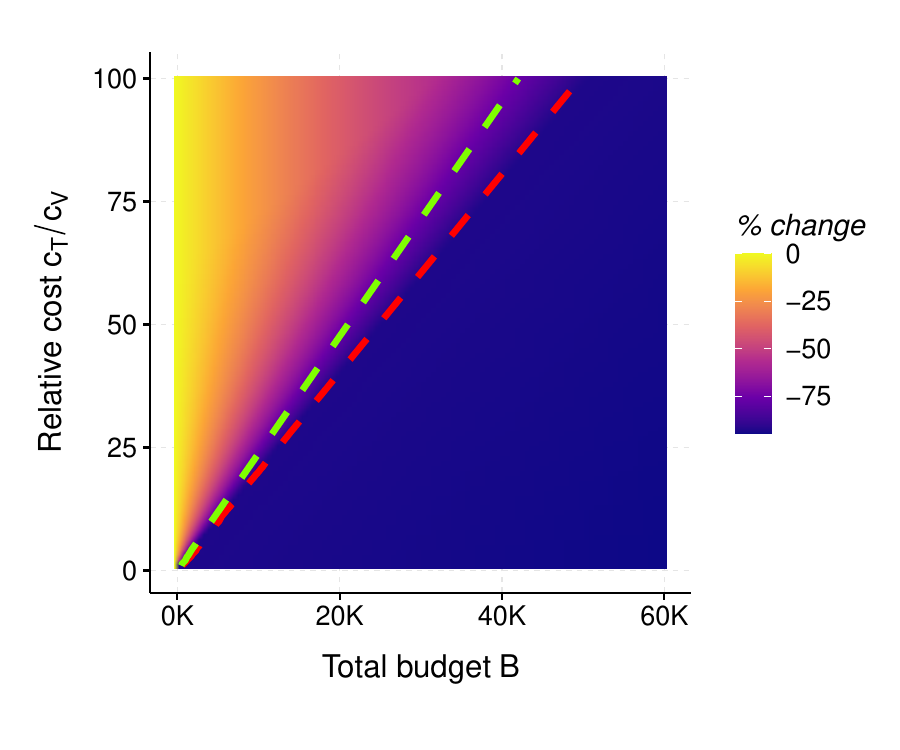}
        \caption{Influenza}
        \label{fig:limited-budget-1yr-flu}
    \end{subfigure}
    \hfill
    \begin{subfigure}[b]{0.49\textwidth}
        \centering
        \includegraphics[width=\textwidth]{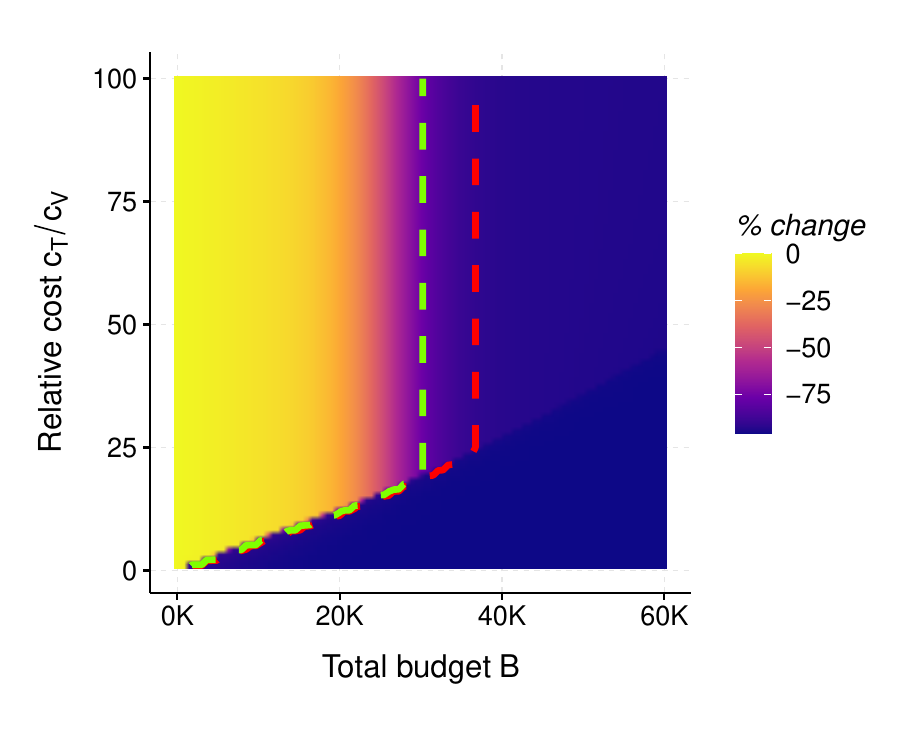}
        \caption{Measles}
        \label{fig:limited-budget-1yr-measles}
    \end{subfigure}
    \caption{Optimal budget allocation strategy over a 1-year horizon evaluated under the finite, unreplenished stockpile scenario for two disease profiles: an influenza-like low-transmissibility case ($\R_0=1.5$) and a measles-like high-transmissibility case ($\R_0=15$). At a given total budget ($x$-axis) and varying relative cost ($y$-axis), the plot shows the maximum reduction in incidence achieved by the optimal fractional budget split ($f_V \in [0, 1]$). 
    The red and green curves shows a 90\% and 75\% reduction, respectively.}
    \label{fig:limited-budget-comparison}
\end{figure}

Figure~\ref{fig:limited-budget-comparison} takes another view of the same problem, considering this time the role of the $c_T/c_V$ cost ratio.
This reveals a threshold in the optimal solution space: the \textit{treatment saturation boundary}. 
Because therapeutic treatments can only be consumed by individuals who actually become symptomatic ($I(t)$), there is a maximum number of useful treatment courses bounded by the observable outbreak size.
If the relative cost of treatment drops below this boundary, a health authority can afford enough treatment doses to perfectly cure 100\% of all symptomatic individuals generated throughout the entire epidemic using only a small fraction of their budget. 
Once treatment essentially saturates its epidemiological utility, the optimal constraint forces all remaining excess budget to be diverted into vaccination. 
Dropping the relative cost any lower below this saturation bound cannot yield any physically greater utility, resulting in a region of homogenous, maximal impact reduction.

%%%%%%%%%%%%%%%%%%%%%%%%%%%%%%
%%%%%%%%%%%%%%%%%%%%%%%%%%%%%%
%%%%%%%%%%%%%%%%%%%%%%%%%%%%%%
%%%%%%%%%%%%%%%%%%%%%%%%%%%%%%
\section{Discussion}
\label{sec:discussion}

In the case of unreplenished doses, we highlighted that the deterministic ODE gives a pessimistic view of the situation, since the model's long-term behaviour follows that of the equivalent model with no intervention. 
Thus, in particular, the disease always rebounds when $\R_0>1$. 
However, consideration of both transient dynamics and of a corresponding continuous-time Markov chain highlights that a sufficiently large early intervention can push the infection close to extinction, and, in the stochastic setting, even to an absorbing extinction state before stocks deplete.
The discrepancy between the mathematical results and what actually happens during an outbreak is further illustrated when we consider unlimited doses scenarios. 
There, we observe (Figure~\ref{fig:coverage-vs-pcttreatment-unlimited-doses}) that regions with a reproduction number larger than one still see a large reduction in the number of cases observed during a one year period, which is again confirmed by using the CTMC analogue of the model.

Thus, if all required doses are available at the start of the epidemic, the situation is quite favourable even when doses are not replenished.
However, this assumption is somewhat unrealistic when considering low GDP countries or emergency situations where vaccine and treatment procurement is a challenge and keeping a large stockpile of doses may lead to excessive wastage.
To model this situation, we considered (computationally) two impulsive delivery schemes. 
In the fixed schedule scheme, doses are delivered at a certain number of fixed time points during the year, while in the reactive scheme, doses are delivered when the incidence reaches a certain threshold.
When considering fixed versus reactive schemes, we observed that reactive schemes might save resources in a mild season, but they carry a severe penalty if the exponential growth of a fast-spreading outbreak escapes containment before the threshold is triggered.
Note that real-world logistical decisions are triggered by delayed cumulative reporting metrics, since it may take time for cases to be reported and decisions to be made.
However, we used an instantaneous trigger to preserve the well-posedness and tractability of the ODE framework, leaving the impulsive DDE extension for future work

Finally, we concluded (in Section~\ref{subsec:incorporation-of-dose-cost}) our computational exploration of the model by incorporating vaccination and treatment costs.
We observed that when facing extreme cost asymmetry  with a highly transmissible infection as, e.g., with measles, it is optimal to spend almost all of the budget on vaccination, even if treatment is perfectly effective.

This is preliminary work and more effort is needed to fully explore the dynamics of dose-limited interventions.
We have barely considered, for instance, the consequences of non-constant dose delivery, having only considered computationally the case of impulsive dose deliveries.
We noted in Section~\ref{sec:analysis-full-model-at-EE} that the full model is likely to exhibit bi- or multistability, depending on the nature of the dose replenishment functions $a_V$ and $a_T$ and the dose consumption functions $p$, $v$ and $g$.
Although the impulsive delivery scenario explored numerically in Section~\ref{subsec:numerics-impulsive-doses} is realistic in some contexts, it is not always the case that interventions are impulsive and exploring the complex dynamics of \eqref{sys:general-form} is a challenge for future work.
Another issue with the reactive scenario is that there can be a delay between the time when a certain incidence is reached and when it is observed in practice, which could warrant the use of a delay.
However, note that in an exponential growth phase, if one has a rough idea of the reproduction number and the situation, say, one week in the past, then deriving an approximation of the current incidence is possible, which would give a practical alternative to using delays.

It is worth noting that while the problem of dose runouts and limited medical resources lends itself naturally to an optimal control formulation, we have chosen not to pursue that approach in the present study. 
As discussed in the Introduction, optimal control frameworks typically aim to identify dynamic allocation strategies that minimize an objective function, such as disease burden or financial cost, over a defined time horizon. 
Instead, our primary objective here was to mechanistically embed the constraints of dose availability directly into the compartmental transmission dynamics. 
By doing so, we capture the unoptimized, baseline epidemiological behaviour that emerges when interventions falter due to supply exhaustion. 
This provides a platform for the development of more refined optimal control strategies.

Finally, note that our model uses \emph{per capita} rates $v(D_V)S$ and $g(D_T)I$ for the administration of interventions. 
This is not an issue since the total population is bounded.
However, explicitly incorporating logistical bottlenecks via saturating functional responses would be an interesting avenue for future modelling efforts.

\bibliographystyle{plain30}
\bibliography{bibliography,biblio_Arino_Julien}

%%%%%%%%%%%%%%%%%%%%%
%%%%%%%%%%%%%%%%%%%%%
%%%%%%%%%%%%%%%%%%%%%
\clearpage
\appendix

\begin{center}
    \Large
    Dose-limited interventions in an epidemiological model\\[0.2cm]
    Supplementary material \\[0.5cm]
    \large
    A.S. Abdramane, H. Djimramadji, M.S. Daoussa Haggar, P.M. Tchepmo Djomegni and J. Arino
\end{center}

These Supplementary Materials complement the main text.
They contain all the proofs in the paper, as well as most ``mathematical discussions'', although some material remains in the text if it is useful to understand the model.
The structure of this document follows roughly that of the main text, with complements on the model, then the mathematical analysis and finally results.

%%%%%%%%%%%%%%%%%%%%%
%%%%%%%%%%%%%%%%%%%%%
%%%%%%%%%%%%%%%%%%%%%
\section{The general continuous-time Markov chain corresponding to \eqref{sys:general-form}}
\label{supp:CTMC}
For computational work on \eqref{sys:general-form}, we also use a continuous-time Markov chain equivalent to this model.
The CTMC uses transitions in Table~\ref{tab:general-CTMC-transitions}.

\begin{table}[htbp]
	\centering
	\begin{tabular}{lll}
		\toprule
		Event & Transition   & Transition rates \\
		\midrule
		Recruitment without vaccination & $\to(S+1)$ & $(1-p(D_V))b$ \\
		Recruitment with vaccination & $\to(V+1, D_V-1)$ & $p(D_V)b$ \\
		\midrule
		Natural death of an $S$ & $\to(S-1)$ & $d S$ \\
		Natural death of an $L$ & $\to(L-1)$ & $d L$ \\
		Natural death of an $I$ & $\to(I-1)$ & $d I$ \\
		Natural death of an $A$ & $\to(A-1)$ & $d A$ \\
		Natural death of an $R$ & $\to(R-1)$ & $d R$ \\
		Natural death of a $V$ & $\to(V-1)$ & $d V$ \\
		\midrule
		Infection $S$ by $I$ &  $\to(S-1,L+1)$ & $\beta S I$ \\
		Infection $S$ by $A$ &  $\to(S-1,L+1)$ & $\eta\beta S A$ \\
		Infection $V$ by $I$ &  $\to(V-1,L+1)$ & $(1-\sigma)\beta V I$ \\
		Infection $V$ by $A$ &  $\to(V-1,L+1)$ & $(1-\sigma)\eta\beta V A$ \\
		Latent to symptomatic & $\to(L-1,I+1)$ & $\pi\varepsilon L$ \\
		Latent to asymptomatic & $\to(L-1,A+1)$ & $(1-\pi)\varepsilon L$ \\
		Disease-induced death in $I$ & $\to(I-1)$ & $\delta I$ \\
		Recovery in $I$ & $\to(I-1,R+1)$ & $\gamma I$ \\
		Recovery in $A$ & $\to(A-1,R+1)$ & $\gamma A$ \\
		End of vaccine protection & $\to(V-1,S+1)$ & $\omega_v V$ \\
		End of disease-induced immunity & $\to(R-1,S+1)$ & $\omega_r R$ \\
		\midrule
		Vaccination within the population & $\to(S-1,V+1,D_V-1)$ & $v(D_V)S$ \\
		Treatment of an infectious & $\to(I-1,R+1,D_T-1)$ & $g(D_T)I$ \\
		\midrule
		Purchase of a vaccine dose & $\to(D_V+1)$ & $a_V(D_V)$ \\
		Expiration of a vaccine dose & $\to(D_V-1)$ & $\kappa_VD_V$ \\
		Purchase of a treatment dose & $\to(D_T+1)$ & $a_T(D_T)$ \\
		Expiration of a treatment dose & $\to(D_T-1)$ & $\kappa_TD_T$ \\
		\bottomrule
	\end{tabular}
	\caption{Reaction rates for the CTMC model corresponding to \eqref{sys:general-form}. Only states that change through the transition are shown.}
	\label{tab:general-CTMC-transitions}
\end{table}

%%%%%%%%%%%%%%%%%%%%%
%%%%%%%%%%%%%%%%%%%%%
%%%%%%%%%%%%%%%%%%%%%
%%%%%%%%%%%%%%%%%%%%%
\section{Mathematical analysis}
\label{supp:math-analysis}

To begin, it is clear that by the assumption that $a_V$, $a_T$, $p$, $v$ and $g$ be $C^1$, solutions to \eqref{sys:general-form} exist and are unique.
Invariance of $\IR_+^8$ under the flow of \eqref{sys:general-form} also follows from the assumption on $p$, $v$ and $g$ near or at 0.
It is also simple to show that the set
\begin{equation}\label{eq:Omega-full-system}
	\Omega_{L=I=A=0}^\eqref{sys:general-form} = \left\{\bX \in \mathbb{R}_+^8 \mid L = I = A = 0 \right\}
\end{equation}
is positively invariant under the flow of \eqref{sys:general-form}, where we have denoted
\[
\bX = (S, L, I, A, R, V, D_V, D_T)\in\IR_+^8.
\]

%%%%%%%%%%%%%%%%%%%%%
%%%%%%%%%%%%%%%%%%%%%
\subsection{The model without intervention}
\label{supp:math-analysis-proof-no-intervention}
To use the next generation matrix method of \citeapp{van2002reproduction}, we write the vectors of new infections and other flows in the infected compartments, 
\[
\F = \begin{pmatrix}
	\beta S(I+\eta A) \\ 0\\ 0
\end{pmatrix}
\text{ and }
\W = \begin{pmatrix}
	(\varepsilon+d)L \\ -\pi\varepsilon L+(\gamma+\delta +d)I \\ -(1-\pi)\varepsilon L + (\gamma+d)A
\end{pmatrix}.
\]
It follows that the next generation matrix takes the form
\begin{equation}\label{eq:FW-inv}
	FW^{-1}=
	%\frac{1}{(\varepsilon+d)(\delta+d)}
	\frac{b}{d}
	\begin{pmatrix}
		0 & \beta & \beta\eta \\ 0 & 0 & 0 \\ 0 & 0 & 0
	\end{pmatrix}
	\begin{pmatrix}
		\varepsilon+d & 0 & 0 \\ -\pi\varepsilon & \gamma+\delta +d & 0 \\ -(1-\pi)\varepsilon & 0 & \gamma+d
	\end{pmatrix}^{-1}
\end{equation}
and thus the basic reproduction number without vaccination takes the form \eqref{eq:R0-no-vaccination} in the main text.

\begin{proof}[Proof of Proposition~\ref{prop:isolated-novacc-GAS}]
It is easy to show that the set
\[
\Omega = \left\{(S,L,I,A,R)\in\IR^5_+;S+L+I+A+R\leq \frac bd\right\}
\]
is positively invariant under the flow of \eqref{sys:no-intervention} and attracts all solutions such that $L(0)+I(0)+A(0)>0$.
(The case $L(0)+I(0)+A(0)=0$ is trivial.)

\emph{Biological relevance of $\bE^\eqref{sys:no-intervention}_{\star}$.}
This depends on the sign of $I^\star$. 
Let us show that
\[
\frac{(\varepsilon+d)(\gamma+\delta+d)}{\pi\varepsilon} - \frac{\omega_r \gamma \Theta}{\omega_r+d} > 0.
\]
Because $\omega_r/(\omega_r+d)< 1$, the negative term is bounded by $\gamma \Theta$. 
Expanding $\gamma \Theta$ yields
\[
\gamma \Theta = \gamma \left( 1 + \frac{(1-\pi)(\gamma+\delta+d)}{\pi(\gamma+d)} \right).
\]
Since $\gamma/(\gamma+d) < 1$, $\gamma \Theta < \gamma + \frac{(1-\pi)(\gamma+\delta+d)}{\pi}$.
Conversely, the positive term satisfies
\begin{multline*}
\frac{(\varepsilon+d)(\gamma+\delta+d)}{\pi\varepsilon} = \frac{(\varepsilon+d)}{\varepsilon} \frac{(\gamma+\delta+d)}{\pi} >  \\
\frac{\pi(\gamma+\delta+d) + (1-\pi)(\gamma+\delta+d)}{\pi} = \gamma+\delta+d + \frac{(1-\pi)(\gamma+\delta+d)}{\pi}.
\end{multline*}
This is clearly strictly greater than $\gamma + {(1-\pi)(\gamma+\delta+d)}/{\pi}$, confirming that the denominator is strictly positive.
Thus, the sign of $(\R_0-1)/\R_0$ determines the sign of $I^\star$ and the biological relevance of $\bE^\eqref{sys:no-intervention}_{\star}$. 

\emph{Case $\R_0 < 1$.}
To prove the global asymptotic stability of the disease-free equilibrium when $\R_0 < 1$, we adapt the standard left eigenvector method of \citeapp{shuai2013global}. 
The left eigenvector $v$ of the next generation matrix $FW^{-1}$ defined in \eqref{eq:FW-inv} corresponding to the dominant eigenvalue $\R_0$ takes the form
$v = \left(1,\; \frac{\beta S_0}{\R_0(\gamma+\delta+d)},\; \frac{\beta \eta S_0}{\R_0(\gamma+d)} \right)$. 
Using this eigenvector for the Lyapunov function candidate $v^T(L,I,A)$ yields a derivative that bounds a negative multiple of $I$ and $A$. 
However, by omitting the division by $\R_0$ in the components of the eigenvector (which is equivalent to using coefficients derived from the first row of $FW^{-1}$), we shift the $\R_0-1$ factor onto the $L$ compartment. 
We therefore define the Lyapunov function candidate as
\[
\L
=
L
+
\frac{\beta S_0}{\gamma+\delta+d} I
+
\frac{\beta \eta S_0}{\gamma+d} A,
\]
where $S_0 = b/d$. 
Note that because the total population $N = S+L+I+A+R$ satisfies $N' = b - dN - \delta I \leq b - dN$, it follows that $\limsup_{t\to\infty} N(t) \leq b/d$. Consequently, the set $\Omega$ is positively invariant and attractive, guaranteeing $S(t) \leq S_0$ within $\Omega$.
Unlike the model studied in \citeapp{lamichhane2015global}, the addition of a term in $R$ to the Lyapunov function is unnecessary for the disease-free equilibrium.

Differentiating $\L$ along solutions of \eqref{sys:no-intervention}, we obtain
\[
\begin{aligned}
\L'
=\;&
\beta S(I+\eta A) - (\varepsilon+d)L \\
&+ \frac{\beta S_0}{\gamma+\delta+d} \left( \pi\varepsilon L - (\gamma+\delta+d)I \right) 
+ \frac{\beta \eta S_0}{\gamma+d} \left( (1-\pi)\varepsilon L - (\gamma+d)A \right).
\end{aligned}
\]
Rearranging terms yields
\[
\begin{aligned}
\L'
=\;&
L \left( -(\varepsilon+d) + \frac{\beta S_0 \pi \varepsilon}{\gamma+\delta+d} + \frac{\beta \eta S_0 (1-\pi)\varepsilon}{\gamma+d} \right) \\
&+ I \left( \beta S - \beta S_0 \right) + A \left( \beta \eta S - \beta \eta S_0 \right).
\end{aligned}
\]
Recognizing the expression for $\R_0$ from \eqref{eq:R0-no-vaccination} in the coefficient of $L$, this simplifies to
\[
\L' = (\varepsilon+d)(\R_0-1)L + \beta(S-S_0) (I+\eta A).
\]
Since $S \leq S_0$ within $\Omega$, it follows that $\beta(S-S_0)(I+\eta A) \leq 0$. Therefore, if $\R_0 < 1$, we have $\L' \leq 0$, with equality holding if and only if $L=0$ and $S=S_0$ (since $I=0$ and $A=0$ quickly follow from $L=0$). By LaSalle's Invariance Principle, the disease-free equilibrium $\bE_{0}^\eqref{sys:no-intervention}$ is globally asymptotically stable in $\Omega$, and thus in $\IR_+^5$.

\emph{Case $\R_0 > 1$.}
When $\R_0 > 1$, the disease-free equilibrium becomes unstable and the endemic equilibrium \eqref{eq:EE-no-intervention} becomes biologically relevant, meaning all its components are non-negative.
\end{proof}

\begin{remark}
Unlike the SLIAR model in \citeapp{lamichhane2015global}, the inclusion of waning immunity (the flow from $R$ back to $S$ at rate $\omega_r$) precludes the use of standard Volterra-type Lyapunov functions to establish global asymptotic stability for the endemic equilibrium. The derivative of the susceptible component $\mathcal{L}_S = S-S^\star-S^\star\ln(S/S^\star)$ along trajectories generates cross terms, such as $\omega_r R(1 - S^\star/S)$, which cannot be systematically balanced or canceled by terms arising from the infected and recovered compartments. Consequently, it is generally not possible to combine these into a globally negative semi-definite derivative using standard coefficients.
Note also that we could use the same reasoning as in Theorem~\ref{th:full-model-DFE-and-persistence} to show that the disease is uniformly persistent when $\R_0>1$.
\end{remark}
%%%%%%%%%%%%%%%%%%%%%%%%%%%%%%%%%%%%%%%%%%%%%%%%%%%%%%%%%%%%%%%%%%%%%%%%%%%%%%%%%%%%%%%%%%%

%%%%%%%%%%%%%%%%%%%%%%%%%%%%%%%
%%%%%%%%%%%%%%%%%%%%%%%%%%%%%%%
\subsection{Unreplenished doses leads to no-intervention scenario}
\label{supp:analysis-nodose-novacc}
\begin{proof}[Proof of Lemma~\ref{lemma:nodose-novacc}]
Under the assumption that $a_V(D_V)\equiv 0$, \eqref{sys:general-form-dD} takes the form
\[
\frac{dD_V}{dt} = - p(D_V) b - v(D_V) S-\kappa_VD_V< 0
\]
if $D_V(0)>0$.
So $D_V(t)\to 0$ exponentially as $t\to\infty$.
In turn, this implies that $p,v\to 0$.
Likewise, since $a_T(D_T)\equiv 0$, $D_T(t)\to 0$ and thus $g\to 0$.
So, we obtain the simplified model

    \begin{align*}
    S' &= b + \omega_rR+\omega_vV - \beta S(I+\eta A) - d S \\
    L' &= \beta (S+(1-\sigma) V)(I+\eta A) - (\varepsilon + d) L \\
    I' &= \pi\varepsilon L -(\gamma+\delta +d) I \\
    A' &= (1-\pi)\varepsilon L - (\gamma+d) A \\
    R' &= \gamma(I+A)-(\omega_r+d)R \\
    V' &= -(1-\sigma)\beta V(I+\eta A) -(\omega_v+d)V.
\end{align*}
The last equation in turn leads to $V$ always going to 0 provided $V(0)>0$ (or remaining $0$ if $V(0)=0$).
Thus, considering the non-replenishing stockpile scenario leads to consider the model without intervention \eqref{sys:no-intervention}.
\end{proof}

%%%%%%%%%%%%%%%%%%%%%
%%%%%%%%%%%%%%%%%%%%%
\subsection{Existence of a backward bifurcation}
\label{supp:backward-bifurcation}

As mentioned in the text, the case without limitation \eqref{sys:unlimited-vacc} is almost identical to the model in \citeapp{ArinoMilliken2022a}.
As a consequence, \eqref{sys:unlimited-vacc} can undergo a backward bifurcation under the conditions below, with the following results from \citeapp{ArinoMilliken2022a} adapted to \eqref{sys:unlimited-vacc} and given here without proof.

\begin{proposition}
	\label{prop:pos_eqb}
	Let
	\begin{subequations}
		\label{eq:coeffs_P}
		\begin{align}
			a_0 &=d(\varepsilon+d)(v^\star+\omega_v+d)(\R_v^{\eqref{sys:unlimited-vacc}}-1), \\
			a_1 &= \lambda^2(1-\sigma)b - \lambda(1-\sigma)d(\varepsilon+d) \\
			&\quad + \lambda((1-\sigma)v^\star + \omega_v+d)\left(\frac{\omega_r\varepsilon \left( \gamma(\gamma+g^\star+\delta+d) + \pi(g^\star d - \gamma\delta) \right)}{(\omega_r+d)(\gamma+d)(\gamma+g^\star+\delta+d)}-(\varepsilon+d)\right) \nonumber \\
			a_2 &= \lambda^2(1-\sigma)\left(\frac{\omega_r\varepsilon \left( \gamma(\gamma+g^\star+\delta+d) + \pi(g^\star d - \gamma\delta) \right)}{(\omega_r+d)(\gamma+d)(\gamma+g^\star+\delta+d)}-(\varepsilon+d)\right)\leq 0,
			\label{eq:coeffs_P_a2}
		\end{align}
	\end{subequations}
	where $\lambda$ is given by \eqref{eq:lambda}. System \eqref{sys:unlimited-vacc} has
	\begin{enumerate}[label=(\roman*), ref=(\roman*)]
		\item a unique positive endemic equilibrium if $\R_v^{\eqref{sys:unlimited-vacc}}>1$ or $\R_v^{\eqref{sys:unlimited-vacc}}<1$, $a_1>0$ and $a_1^2-4a_0a_2=0$;
		\item two positive endemic equilibria if $\R_v^{\eqref{sys:unlimited-vacc}}<1$, $a_1>0$ and $a_1^2-4a_0a_2>0$;
		\item zero positive equilibria otherwise, and in particular when $\frac{\lambda}{\varepsilon+d}\bar S_0<1$.
	\end{enumerate}
\end{proposition}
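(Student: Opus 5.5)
The plan is to reduce the endemic-equilibrium question for \eqref{sys:unlimited-vacc} to a single polynomial equation in one unknown, following the approach of \citeapp{ArinoMilliken2022a}. The unknown is the force of infection, or equivalently $I^\star$.

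\emph{Step 1: express everything in terms of $L$.} At an equilibrium with $L>0$, the $I$, $A$ and $R$ equations are linear in $L$, so I solve them first. This gives $I=\pi\varepsilon L/(\gamma+g^\star+\delta+d)$ and $A=(1-\pi)\varepsilon L/(\gamma+d)$. Hence
\[
I+\eta A=\frac{\lambda}{\beta}L,
\]
with $\lambda$ as in \eqref{eq:lambda}, which is exactly why $\lambda$ appears. The $R$ equation then gives $R$ as an explicit multiple of $L$. Setting $x=\lambda L$ for the force of infection, the $V$ equation gives
\[
V=\frac{p^\star b+v^\star S}{(1-\sigma)x+\omega_v+d}.
\]

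\emph{Step 2: close the system.} The $L$ equation reads $x\,(S+(1-\sigma)V)=(\varepsilon+d)L$, i.e. $\lambda(S+(1-\sigma)V)=\varepsilon+d$. I use the $S$ equation to eliminate $S$. A cleaner alternative is to sum the $S$, $L$, $R$ and $V$ equations, which eliminates the incidence terms and expresses $S+V$ linearly in $L$. Substituting $V$ from Step 1 and clearing the denominator $(1-\sigma)x+\omega_v+d$ yields a quadratic $a_2L^2+a_1L+a_0=0$. I then identify its coefficients.
\begin{itemize}
\item At $L=0$ the constant term reduces to the DFE relation, so it equals a positive multiple of $\R_v^{\eqref{sys:unlimited-vacc}}-1$, namely $a_0$ in \eqref{eq:coeffs_P}.
\item The leading coefficient carries the factor $\omega_r\varepsilon(\cdots)/((\omega_r+d)\cdots)-(\varepsilon+d)$. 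This factor is the net return of recovered individuals to $S$ per unit $L$, minus the outflow from $L$.
\end{itemize}

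\emph{Step 3: sign of $a_2$.} Showing $a_2\le0$ is the main obstacle. The bracket is the fraction of individuals leaving $L$ that eventually return to $S$, and it should be compared with $\varepsilon+d$. I would bound it directly, as in the proof of Proposition~\ref{prop:isolated-novacc-GAS}, using $\omega_r/(\omega_r+d)<1$, $\gamma/(\gamma+d)<1$ and the identity $\pi(g^\star d-\gamma\delta)=\pi(\gamma+d)(\gamma+g^\star)-\pi\gamma(\gamma+g^\star+\delta+d)$. With these the bracket is at most $\varepsilon-\varepsilon(\cdots)<\varepsilon+d$.

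\emph{Step 4: count positive roots.} Since $a_2<0$, Descartes' rule of signs settles the cases.
\begin{itemize}
\item If $\R_v^{\eqref{sys:unlimited-vacc}}>1$, then $a_0>0$ and there is exactly one positive root.
\item If $\R_v^{\eqref{sys:unlimited-vacc}}<1$, then $a_0<0$. Positive roots require $a_1>0$, and there are two, one (double) or zero according to the sign of the discriminant. This gives (i)–(iii).
\end{itemize}
Each positive $L$ yields nonnegative $I$, $A$, $R$, $V$, and $S$ via Step 2, so every positive root gives a biologically relevant equilibrium.

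For the final claim in (iii), I would show that $\lambda\bar S_0/(\varepsilon+d)<1$ forces $a_1\le0$. Using $b=d\bar S_0$, I would rewrite $a_1=\lambda(1-\sigma)d(\varepsilon+d)\bigl(\lambda\bar S_0/(\varepsilon+d)-1\bigr)+(\text{negative term})$, where the second summand is negative because of Step 3. Alternatively, this case follows from a comparison argument: $S+(1-\sigma)V\le S+V\le\bar S_0$, so the equilibrium condition $\lambda(S+(1-\sigma)V)=\varepsilon+d$ cannot hold.
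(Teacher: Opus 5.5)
The paper gives no proof of this proposition; it imports it from Arino--Milliken. Your reduction is the standard one, and it does produce the stated coefficients. Write $c$ for the bracket $\frac{\omega_r\varepsilon(\cdots)}{\cdots}-(\varepsilon+d)$ appearing in $a_1$ and $a_2$, so that $d(S+V)=b+cL$ at an endemic equilibrium. Your Steps 1--2 then lead to
\[
\bigl(\lambda(b+cL)-d(\varepsilon+d)\bigr)\bigl((1-\sigma)\lambda L+v^\star+\omega_v+d\bigr)=\lambda\sigma\bigl(p^\star bd+v^\star b+v^\star cL\bigr),
\]
and expanding this gives exactly $a_2L^2+a_1L+a_0=0$. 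Your bound $c<0$ is correct, and so is your comparison argument for the case $\lambda\bar S_0/(\varepsilon+d)<1$. There are, however, two genuine holes.

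\textbf{The case $\R_v=1$.} Your root count covers only $\R_v>1$ and $\R_v<1$, yet you conclude that ``this gives (i)--(iii)'', and the ``otherwise'' in (iii) includes $\R_v=1$. There $a_0=0$, so $a_2L^2+a_1L=L(a_2L+a_1)$ has the positive root $L=-a_1/a_2$ whenever $a_1>0$. That is a positive endemic equilibrium, which contradicts (iii). The case is not vacuous. At $\R_v=1$ one has $\lambda\bar S_0(1-\sigma v_c)=\varepsilon+d$, with $v_c$ as in \eqref{eq:v-c-DFE-unlimited-vacc}, and substituting this into $a_1$ shows that $a_1>0$ is equivalent to condition (H1). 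This is precisely the backward-bifurcation regime of Proposition~\ref{prop:sufficient-condition-BB}. So no argument can prove the statement verbatim. You must treat $\R_v=1$ separately and amend the conclusion, for instance by adding ``or $\R_v=1$ and $a_1>0$'' to (i), as in the usual formulation.

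\textbf{Positivity of $S$ and $V$.} Your claim that ``each positive $L$ yields nonnegative $S$ and $V$ via Step 2'' is false as stated. From $d(S+V)=b+cL$ with $c<0$, you get $S+V<0$ as soon as $L>b/|c|$. Nonnegativity holds only at roots, and it needs an argument. For $L>0$, your three relations (summed equation, $V$ equation, and $\lambda(S+(1-\sigma)V)=\varepsilon+d$) are equivalent to the $S$, $L$ and $V$ equilibrium equations. Hence at a root, $(S,V)$ solves
\[
\begin{pmatrix}\lambda L+v^\star+d & -\omega_v\\ -v^\star & (1-\sigma)\lambda L+\omega_v+d\end{pmatrix}\begin{pmatrix}S\\ V\end{pmatrix}=\begin{pmatrix}(1-p^\star)b+\omega_rR\\ p^\star b\end{pmatrix}.
\]
The determinant $(\lambda L+v^\star+d)((1-\sigma)\lambda L+\omega_v+d)-v^\star\omega_v$ is positive. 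The right-hand side is nonnegative because $R=((\gamma+g^\star)I+\gamma A)/(\omega_r+d)\ge 0$. Cramer's rule then gives $S>0$ and $V\ge 0$.

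\textbf{Minor points.} ``Since $a_2<0$'' excludes $\sigma=1$. In that case $a_2=0$ and the equation is linear with $a_1=\lambda c(\omega_v+d)<0$. This is consistent with the statement but should be stated explicitly. In your first argument for the last claim of (iii), you should also note that $\R_v\le\lambda\bar S_0/(\varepsilon+d)<1$, so that $a_0<0$. Your alternative comparison argument avoids this issue altogether.
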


\begin{proposition}
	\label{prop:sufficient-condition-BB}
	Let 
	\begin{multline}
        \label{eq:condition-H1}
		\sigma v_c\left(\bE_0^{\eqref{sys:unlimited-vacc}}\right)\frac{(1-\sigma)d(\varepsilon+d)}{(1-\sigma)v^\star+\omega_v+d}\left(\frac{\lambda}{\varepsilon+d}\bar S_0\right)
		> \\
        \varepsilon+d - \frac{\omega_r\varepsilon \left( \gamma(\gamma+g^\star+\delta+d) + \pi(g^\star d - \gamma\delta) \right)}{(\gamma+d)(\gamma+g^\star+\delta+d)(\omega_r+d)},\tag{\textbf{H1}}
	\end{multline}
	where $\bE_0^{\eqref{sys:unlimited-vacc}}$ is the disease-free equilibrium and $v_c(\bE_0^{\eqref{sys:unlimited-vacc}})$ is given by \eqref{eq:v-c-DFE-unlimited-vacc}. 
	System \eqref{sys:unlimited-vacc} undergoes a backward bifurcation at $\bE_0^{\eqref{sys:unlimited-vacc}}$ and $\R_v^{\eqref{sys:unlimited-vacc}}=1$ whenever condition \eqref{eq:condition-H1} holds.
\end{proposition}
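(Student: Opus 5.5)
The plan is to avoid a center-manifold computation and instead read the backward bifurcation off the equilibrium polynomial of Proposition~\ref{prop:pos_eqb}. There, positive endemic equilibria of \eqref{sys:unlimited-vacc} correspond to positive roots of
\[
P(I)=a_2I^2+a_1I+a_0,
\]
with coefficients \eqref{eq:coeffs_P}. We vary a bifurcation parameter, say $\beta$ (equivalently $\lambda$), so that $\R_v^{\eqref{sys:unlimited-vacc}}$ crosses $1$. Then $a_0$ changes sign at the threshold, being negative for $\R_v^{\eqref{sys:unlimited-vacc}}<1$, and $a_2\le 0$ throughout. A backward bifurcation at $\bE_0^{\eqref{sys:unlimited-vacc}}$ means a branch of positive equilibria emanating from the DFE into the region $\R_v^{\eqref{sys:unlimited-vacc}}<1$. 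Since the roots of $P$ are $0$ and $-a_1/a_2$ when $a_0=0$, such a branch exists exactly when $a_1>0$ at $\R_v^{\eqref{sys:unlimited-vacc}}=1$. So the whole proof reduces to showing that \eqref{eq:condition-H1} is equivalent to $a_1>0$ at the threshold.

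For the algebra, set
\[
Q:=\varepsilon+d-\frac{\omega_r\varepsilon\left(\gamma(\gamma+g^\star+\delta+d)+\pi(g^\star d-\gamma\delta)\right)}{(\omega_r+d)(\gamma+d)(\gamma+g^\star+\delta+d)}\ge 0,
\]
so that $a_2=-\lambda^2(1-\sigma)Q$ and
\[
a_1=\lambda(1-\sigma)\bigl(\lambda b-d(\varepsilon+d)\bigr)-\lambda\bigl((1-\sigma)v^\star+\omega_v+d\bigr)Q .
\]
By \eqref{eq:S0_V0_fct_vcE0} and \eqref{eq:Rv_SVLIARS}, the condition $\R_v^{\eqref{sys:unlimited-vacc}}=1$ reads
\[
\varepsilon+d=\lambda\bar S_0\bigl(1-\sigma v_c(\bE_0^{\eqref{sys:unlimited-vacc}})\bigr),
\qquad \bar S_0=b/d .
\]
Substituting this gives
\[
\lambda b-d(\varepsilon+d)=d\lambda\bar S_0\,\sigma v_c(\bE_0^{\eqref{sys:unlimited-vacc}}) .
\]
Dividing $a_1>0$ by $\lambda((1-\sigma)v^\star+\omega_v+d)>0$ then yields
\[
\frac{(1-\sigma)d\,\sigma v_c\,\lambda\bar S_0}{(1-\sigma)v^\star+\omega_v+d}>Q .
\]
Writing $\lambda\bar S_0=(\varepsilon+d)\,\lambda\bar S_0/(\varepsilon+d)$ shows this is exactly \eqref{eq:condition-H1}.

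Next, \eqref{eq:condition-H1} is a strict inequality and all coefficients depend continuously on $\beta$. Hence $a_1>0$ persists for $\R_v^{\eqref{sys:unlimited-vacc}}$ slightly below $1$, where $a_0<0$ is small and $a_2\le0$. The discriminant $a_1^2-4a_0a_2$ stays positive because $a_0\to0$. Case (ii) of Proposition~\ref{prop:pos_eqb} then provides two positive endemic equilibria for $\R_v^{\eqref{sys:unlimited-vacc}}\in(1-\epsilon,1)$. Along this branch, the smaller root $I_-\approx -a_0/a_1\to0^+$ as $\R_v^{\eqref{sys:unlimited-vacc}}\to1^-$, so it bifurcates from the DFE. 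For $\R_v^{\eqref{sys:unlimited-vacc}}>1$, case (i) gives the single continuing equilibrium, which is the standard backward bifurcation picture. The degenerate case $a_2=0$ is handled directly: $P$ is then linear with root $-a_0/a_1>0$.

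The main obstacle is bookkeeping rather than ideas. We must check that the $\R_v^{\eqref{sys:unlimited-vacc}}=1$ substitution is legitimate when varying $\beta$, since $\lambda$ depends on $\beta$ while $S_0$ and $V_0$ do not. We must also confirm that the $Q$-factor in $a_1$ matches the bracket appearing in $a_2$ and in \eqref{eq:condition-H1}, including the placement of $\omega_r+d$. If a rigorous statement about the stability of the bifurcating branch is required, the fallback is Castillo-Chavez–Song's center manifold theorem at $\beta^\ast$, where $\R_v^{\eqref{sys:unlimited-vacc}}=1$. The sign of its coefficient $\mathbf{a}$ should reproduce the same inequality, and $\mathbf{b}>0$ holds trivially.
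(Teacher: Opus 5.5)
The paper itself gives no proof of this proposition: it is stated ``without proof'' as an adaptation from Arino and Milliken, so there is no in-paper argument to compare with. Your argument is correct. It is the natural one given Proposition~\ref{prop:pos_eqb}, namely the standard quadratic-coefficient argument for this class of vaccination models. I checked the key algebra. With your $Q$, indeed $a_2=-\lambda^2(1-\sigma)Q$ and $a_1=\lambda(1-\sigma)(\lambda b-d(\varepsilon+d))-\lambda((1-\sigma)v^\star+\omega_v+d)Q$. Since $S_0+(1-\sigma)V_0=\bar S_0(1-\sigma v_c)$, the threshold gives $\lambda b-d(\varepsilon+d)=d\lambda\bar S_0\sigma v_c$, and $a_1>0$ becomes exactly \eqref{eq:condition-H1}. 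The continuation to $\R_v^{\eqref{sys:unlimited-vacc}}$ slightly below $1$ is also sound: the small root is $\approx -a_0/a_1>0$, the discriminant stays positive by continuity, and case (ii) of Proposition~\ref{prop:pos_eqb} applies. Nothing in this step is specific to $\beta$; any parameter that moves $\R_v^{\eqref{sys:unlimited-vacc}}$ transversally through $1$ with continuous coefficients gives the same conclusion.

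A few points to tidy.

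\emph{The variable is $L$, not $I$.} At an equilibrium $\beta(I+\eta A)=\lambda L$. The $L$-equation gives $S+(1-\sigma)V=(\varepsilon+d)/\lambda$. Summing the $S$, $L$, $V$ equations gives $d(S+V)=b-QL$, because $\omega_rR=(\varepsilon+d-Q)L$. The $V$-equation then reduces to $a_2L^2+a_1L+a_0=0$ with precisely the stated coefficients. Since $I$ is proportional to $L$ at equilibrium, no signs change, but $-a_0/a_1$ and $-a_1/a_2$ are values of $L$.

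\emph{$Q>0$ strictly.} The quantity $\gamma(\gamma+g^\star+\delta+d)+\pi(g^\star d-\gamma\delta)$ divided by $(\gamma+d)(\gamma+g^\star+\delta+d)$ equals $\pi(\gamma+g^\star)/(\gamma+g^\star+\delta+d)+(1-\pi)\gamma/(\gamma+d)\le 1$. This is the probability that a latent individual who progresses eventually recovers. Hence $\varepsilon+d-Q\le\omega_r\varepsilon/(\omega_r+d)<\varepsilon$, and $a_2<0$ whenever $\sigma<1$. Moreover \eqref{eq:condition-H1} forces $\sigma\in(0,1)$ and $v_c>0$, so your degenerate case $a_2=0$ cannot occur under the hypothesis.

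\emph{The substitution is legitimate.} At the threshold $\lambda\bar S_0/(\varepsilon+d)=1/(1-\sigma v_c)$, so \eqref{eq:condition-H1} evaluated at the bifurcation point reads
\[
\frac{\sigma v_c}{1-\sigma v_c}\,\frac{(1-\sigma)d(\varepsilon+d)}{(1-\sigma)v^\star+\omega_v+d}>Q,
\]
which does not involve $\beta$ at all.

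\emph{Stability is not covered.} The quadratic argument gives existence of the backward branch, not its stability. If you want the usual stability picture, you do need the Castillo-Chavez--Song computation. Your statement that its coefficient $\mathbf{a}$ reproduces the same inequality is at present an assertion, not a proof.
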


As indicated in the main text, for the parameter values we tested, we were not able to find regions where Condition \eqref{eq:condition-H1} in Proposition~\ref{prop:sufficient-condition-BB} held true.

%%%%%%%%%%%%%%%%%%%%%%%%%%%
%%%%%%%%%%%%%%%%%%%%%%%%%%%
\subsection{The full model \emph{at equilibrium} is typically unlimited}
\label{supp:analysis-full-model}

%%%%%%%%%%%%%%%%%%%%%%%%%%%
\subsubsection{Dynamics without disease and the disease-free equilibrium}
\label{supp:analysis-full-model-on-invariant-set}
We start by considering \eqref{sys:general-form} in the positively invariant set $\Omega_{L=I=A=0}^\eqref{sys:general-form}$ defined by \eqref{eq:Omega-full-system}, where \eqref{sys:general-form} takes the form
\begin{subequations}
   \label{sys:general-form-DFE}
   \begin{align}
        S' &= (1 - p(D_V)) b+\omega_vV - (v(D_V) + d) S
        \label{sys:general-form-DFE-dS} \\
        V' &= p(D_V)b + v(D_V)S - (\omega_v+d)V
        \label{sys:general-form-DFE-dV} \\
        D_V' &= a_V(D_V) - p(D_V) b - v(D_V) S -\kappa_VD_V
        \label{sys:general-form-DFE-dD} \\
        D_T' &= a_T(D_T)-\kappa_TD_T.
        \label{sys:general-form-DFE-dDT} 
   \end{align}
\end{subequations}

An equilibrium of \eqref{sys:general-form-DFE} is a disease-free equilibrium of \eqref{sys:general-form}.
Recall that by assumption, $p$, $v$ and $g$ are nonnegative nondecreasing $C^1$ functions that are zero when $D_V=D_T=0$, with the form given by \eqref{eq:p-function} being a prototypical example, while $a_V$ and $a_T$ are nonnegative $C^1$ functions.
This implies that $\IR_+^4$ is invariant under the flow of \eqref{sys:general-form-DFE}.

At an equilibrium of \eqref{sys:general-form-DFE}, summing \eqref{sys:general-form-DFE-dS} and \eqref{sys:general-form-DFE-dV} gives $S^\star + V^\star = b/d$. 
Substituting $V^\star = b/d - S^\star$ into \eqref{sys:general-form-DFE-dS} then yields
\[
S^\star = \frac{(1-p(D_V))d+\omega_v}{v(D_V)+d+\omega_v}\frac{b}{d},
\]
while \eqref{sys:general-form-DFE-dD} yields
\[
S^\star = \frac{a_V(D_V) - p(D_V) b - \kappa_V D_V}{v(D_V)}.
\]
Thus, at an equilibrium, there must hold that
\[
\frac{(1-p(D_V))d+\omega_v}{v(D_V)+d+\omega_v}\frac{b}{d} =
\frac{a_V(D_V) - p(D_V) b - \kappa_V D_V}{v(D_V)}.
\]
i.e.,
\begin{equation}\label{eq:ai-eq-Psii}
    a_V(D_V)=\Psi(D_V) + \kappa_V D_V,
\end{equation}
where
\[
\Psi(D_V) = \frac{b(d+\omega_v)}{d} \frac{p(D_V)d + v(D_V)}{v(D_V)+d+\omega_v}.
\]
We have $\Psi(0)=0$. Furthermore, the derivative of $\Psi$ takes the form
\[
\Psi'(D_V) = \frac{b(d+\omega_v)}{d(v(D_V)+d+\omega_v)^2} \left[ p'(D_V)d(v(D_V)+d+\omega_v) + v'(D_V)(d(1 - p(D_V))+\omega_v) \right],
\]
whence $\Psi$ is an increasing function of $D_V$ if $p',v'>0$ and a nondecreasing one if both $p'$ and $v'$ are nonnegative.
Since $a_V(0)>0$ and $a_V$ is decreasing, while $D_V\mapsto \Psi(D_V)+\kappa_V D_V$ is increasing and takes value $0$ when $D_V=0$, it follows that \eqref{eq:ai-eq-Psii} admits a unique solution $D_V^\star$.
Similarly, \eqref{sys:general-form-DFE-dDT} yields $a_T(D_T)=\kappa_T D_T$. Since $a_T(0)>0$ and $a_T$ is decreasing, this admits a unique solution $D_T^\star$.
Thus, except when $a_V(D_V)\equiv 0$ and $a_T(D_T)\equiv 0$, in the absence of disease, the general model \eqref{sys:general-form} operates \emph{at equilibrium} under the unlimited doses situation of Section~\ref{subsec:model-unlimited-vaccine}.

%%%%%%%%%%%%%%%%%%%%%%%%%%%
\subsubsection{Stability of the disease-free equilibrium}
\label{supp:analysis-full-model-at-DFE}

\begin{proof}[Proof of Theorem~\ref{th:full-model-DFE-and-persistence}]
Let $\F$ and $\W$ be the vectors of new infections and other transition rates, respectively, in the infected compartments ($L$, $I$ and $A$):
\[
\F = 
\begin{pmatrix}
    \beta (S+(1-\sigma) V)(I+\eta A) \\ 0 \\ 0
\end{pmatrix}
\quad\text{and}\quad
\W = 
\begin{pmatrix}
    (\varepsilon + d) L \\
    -\pi\varepsilon L +(\gamma+g(D_T)+\delta +d) I \\
    -(1-\pi)\varepsilon L + (\gamma+d) A
\end{pmatrix}.
\]

At the disease-free equilibrium (DFE), $I=0$, so equation \eqref{sys:general-form-dT} becomes $D_T' = a_T(D_T) - \kappa_T D_T$. Because $D_V$ and $D_T$ stabilize at the unique roots $D_V^\star$ and $D_T^\star$ established in Section~\ref{supp:analysis-full-model-on-invariant-set}, the dose compartments entirely decouple from the infectious compartments at the DFE. 

Consequently, the new infections matrix $F$ and the transitions matrix $W$ are identical to those in the unlimited supply case. 
Denoting $g^\star = g(D_T^\star)$, these matrices evaluated at the DFE are
\[
F = 
\begin{pmatrix}
    0 & \beta (S_0+(1-\sigma) V_0) & \beta \eta (S_0+(1-\sigma) V_0) \\ 0 & 0 & 0 \\ 0 & 0 & 0
\end{pmatrix}
\]
and
\[
W = 
\begin{pmatrix}
    \varepsilon + d & 0 & 0 \\
    -\pi\varepsilon & \gamma+g^\star+\delta +d & 0 \\
    -(1-\pi)\varepsilon & 0 & \gamma+d
\end{pmatrix},
\]
where $S_0$ and $V_0$ are defined according to \eqref{eq:S0_V0_fct_vcE0} using the steady-state vaccine rates $p^\star = p(D_V^\star)$ and $v^\star = v(D_V^\star)$ as formulated in Section~\ref{subsec:model-unlimited-vaccine}. 
We conclude that the basic reproduction number $\R_0$ is equal to $\R_v^{\eqref{sys:unlimited-vacc}}$ evaluated at the steady-state dose values $D_V^\star$ and $D_T^\star$.
Therefore, the case $\R_v^{\eqref{sys:general-form}} < 1$ follows from Proposition~\ref{prop:unlimited-stability-DFE}.

    To show uniform persistence when $\R_v^{\eqref{sys:general-form}} >1$, note that the total population $N=S+L+I+A+R+V$ satisfies $N' = b - dN - \delta I \leq b - dN$, implying that $N(t)$ is eventually bounded by $b/d$. Thus, the system is point dissipative.
    Since $\bE_0^{\eqref{sys:general-form}}$ is the unique equilibrium on the disease-free boundary invariant set $\Omega_{L=I=A=0}^\eqref{sys:general-form}$, its instability when $\R_v^{\eqref{sys:general-form}} > 1$ (established in Proposition~\ref{prop:unlimited-stability-DFE}) allows us to apply standard uniform persistence theory (see, e.g., Theorem~4.6 in \citeapp{thieme1993persistence} and Theorem~1.3.1 in \citeapp{zhao2003dynamical}).
\end{proof}

%%%%%%%%%%%%%%%%%%%%%
%%%%%%%%%%%%%%%%%%%%%
%%%%%%%%%%%%%%%%%%%%%
%%%%%%%%%%%%%%%%%%%%%
\section{Computational considerations}
\label{supp:computational}

\begin{figure}[htbp]
    \centering
    \begin{tikzpicture}
        \pgfmathdeclarefunction{growingSigmoid}{1}{%
          \pgfmathparse{5 * (1 + exp(-1*10)) * ( 1/(1 + exp(-1*(#1 - 10))) - 1/(1 + exp(1*10)) )}%
        }
        % --- Set up the axes and plot styles ---
        \begin{axis}[
            xlabel={$D_V$},
            ylabel={$p(D_V)$},
            xmin=0, xmax=25,
            ymin=0, ymax=6,
            axis lines=left,
            enlargelimits=false,
            clip=false, % Allow drawing outside the axis box
            xtick=\empty, % Hide x-axis ticks
            ytick=\empty, % Hide y-axis ticks
        ]
        % --- Plot the function ---
        \addplot[
            domain=0:25,
            samples=100,
            color=green!50!black,
            thick,
        ] {growingSigmoid(x)};
        % Dashed lines for the switch point
        \addplot[
            dashed,
            color=gray,
            thick,
        ] coordinates {(10, 0) (10, 2.5)};
        % Dashed line for the asymptote
        \addplot[
            dashed,
            color=gray,
            thick,
        ] coordinates {(0, 5) (25, 5)};
        % Mark the starting point (0, 0)
        \node[circle, fill=red, inner sep=2pt] at (axis cs:0,0) {};
        % Add labels directly to the axes
        \node[anchor=north] at (axis cs:10,0) {$D_V^s$};
        \node[anchor=east] at (axis cs:0,5) {$p^{\max}$};
        % Mark the transition point on the curve
        \node[circle, fill=blue, inner sep=1.5pt] at (axis cs:10, 2.5) {};
        \end{axis}
    \end{tikzpicture}
    \caption{The function $p(D_V)$ defined by \eqref{eq:p-function}.}
    \label{fig:sigmoid-p}
\end{figure}

Parameters are as listed in Table~\ref{tab:parameter_values}; for information, shelf-lives of some common vaccines and treatments are given in Table~\ref{tab:drug-shelf-life}.
Regarding functions, as a first approximation, we use logistic sigmoids for $p(D_V)$, $v(D_V)$ and $g(D_T)$, shifted so that $p(0)=v(0)=g(0)=0$. 
The function $p(D_V)$, for instance, takes the form
\begin{equation}\label{eq:p-function}
p(D_V) = p^{\max} \left(1 + e^{-k_p D_p^s}\right) \left( \frac{1}{1 + e^{-k_p(D_V - D_p^s)}} - \frac{1}{1 + e^{k_p D_p^s}} \right),
\end{equation}
where $p^{\max}=\lim_{D_V\to\infty}p(D_V)$, $D_p^s$ is such that $p(D_p^s)=p^{\max}/2$, i.e., halfway through the switch and $k_p$ is the steepness of the switch. 
See Figure~\ref{fig:sigmoid-p}.
Note that $p^{\max}\leq 1$. To model a switch that is almost instantaneous and occurs when the number of available doses is very close to 0, we can set the halfway point $D_p^s \leq 1/2$ and use a large steepness parameter (e.g., $k_p = 50$). 
The same functional form is used for $v(D_V)$ and $g(D_T)$, with corresponding parameters $v^{\max}$, $D_v^s$, $k_v$ and $g^{\max}$, $D_g^s$, $k_g$. There is no mathematical upper bound on $v^{\max}$ or $g^{\max}$ because they represent per-capita rates. 
While the absolute slope of the switch is proportional to the maximum value, the width of the transition region (e.g., from 10\% to 90\% of the maximum) depends solely on the steepness parameter. 
Thus, setting $D_v^s, D_g^s \leq 1/2$ and $k_v, k_g = 50$ yields an equally instantaneous relative switch for $v$ and $g$ regardless of their maximal values.

In Sections~\ref{sec:computational-unreplenished-doses}, \ref{subsec:numerics-unlimited-vaccine} and \ref{subsec:incorporation-of-dose-cost}, plots considering the reduction in cumulative incidence are obtained as follows.
For the $\R_0$ under consideration, a baseline simulation is run using no intervention and initial conditions the endemic equilibrium.
The total number of new infections $\int_0^T\beta S(t)I(t)dt$ and observed new infections $\int_0^T\pi\varepsilon L(t)dt$ is computed, for $T=1$ year and 5 years.
Then the different scenarios are run with the same initial conditions and the same integrals are computed, from which we derive the reduction in disease burden.
In Section~\ref{subsec:numerics-impulsive-doses}, the setup is the same, but instead of the endemic equilibrium, the initial condition is taken as $L(0)=5$ and other infected compartments empty.
This is done so that the reactive schedule delivery evaluated there can be triggered.

%%%%%%%%%%%%%%%%%%%%%
%%%%%%%%%%%%%%%%%%%%%
%%%%%%%%%%%%%%%%%%%%%
%%%%%%%%%%%%%%%%%%%%%
\section{Existence and uniqueness of solutions with impulsive interventions}
\label{supp:impulsive-existence}

To guarantee existence and uniqueness of solutions for the impulsive system studied numerically in Section~\ref{subsec:numerics-impulsive-doses}, we rely on standard continuation results for impulsive differential equations \citeapp{gao2006pulse,shulgin1998pulse}.

Let $\bX(t) = \left(S(t), L(t), I(t), A(t), R(t), V(t), D_V(t), D_T(t)\right)$ denote the state vector. 
The dynamics between impulses are governed by the continuous flow $\bX'(t) = f(\bX(t))$ defined by \eqref{sys:general-form}. 
At each impulse time $\tau_k$, the state undergoes an instantaneous jump $\bX(\tau_k^+) = \bX(\tau_k^-) + \Delta \bX_k$. 

Because $f(\bX)$ is continuously differentiable, it is locally Lipschitz continuous. 
This ensures that starting from any initial condition $\bX_0 \ge 0$, the continuous system admits a unique solution on the interval $[0, \tau_1]$, where $\tau_1$ is the time of the first impulse. 
At $t = \tau_1$, the state jumps to a uniquely defined new state $\bX(\tau_1^+)$. 
Since the impulsive additions to the dose caches ($\Delta_V, \Delta_T \ge 0$) preserve the nonnegativity of the state vector, $\bX(\tau_1^+)$ serves as a valid initial condition for the subsequent continuous interval $(\tau_1, \tau_2]$.

By induction, this piecewise continuation argument extends to all subsequent intervals $(\tau_k, \tau_{k+1}]$. 
Therefore, there exists a unique piecewise-continuous solution $\bX(t)$ defined for all $t \ge 0$, provided the impulse times do not artificially cluster and accumulate at a finite time, a condition inherently satisfied by both the fixed schedule delivery and the threshold-based reactive regimen.

%%%%%%%%%%%%%%%%%%%%%
%%%%%%%%%%%%%%%%%%%%%
%%%%%%%%%%%%%%%%%%%%%
%%%%%%%%%%%%%%%%%%%%%
\section{A few more figures}
\label{supp:more-figures}

Figure~\ref{fig:coverage-vs-pcttreatment-unlimited-doses-extra} shows additional information to Figure~\ref{fig:coverage-vs-pcttreatment-unlimited-doses}, namely, the value of $\R_v^{\eqref{sys:unlimited-vacc}}$ and the average time to extinction. 
Figure~\ref{fig:impulsive-deliveries-reactive-schedule} shows the relative benefit when utilizing impulsive dose delivery under strict dose limitations, as well as the number of reactive shipments in a one year period.

\begin{figure}[htbp]
    \centering
    \begin{subfigure}{0.49\textwidth}
        \includegraphics[width=\linewidth]{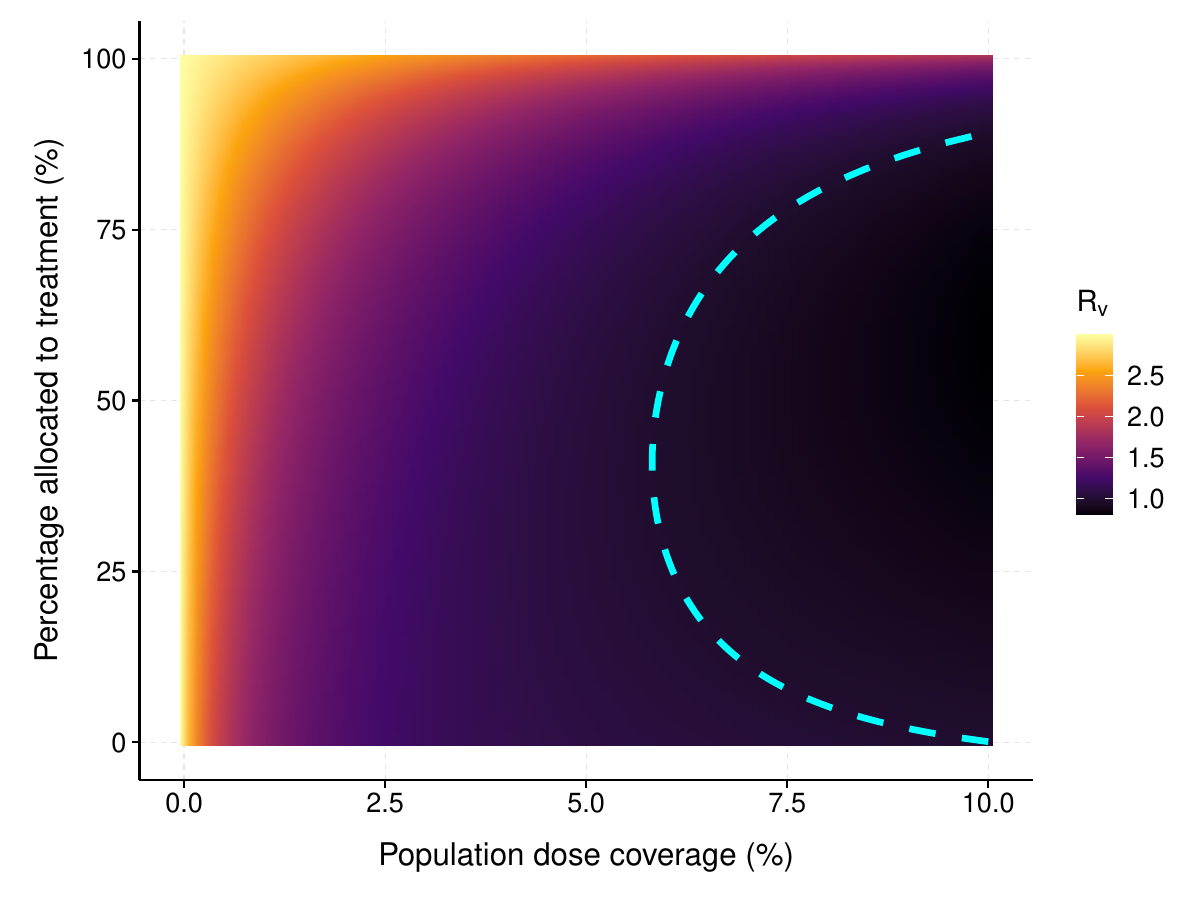}
        \caption{$\R_v$}
        \label{fig:coverage-vs-pcttreatment-unlimited-doses-Rv}
    \end{subfigure}
    \begin{subfigure}{0.49\textwidth}
        \includegraphics[width=\linewidth]{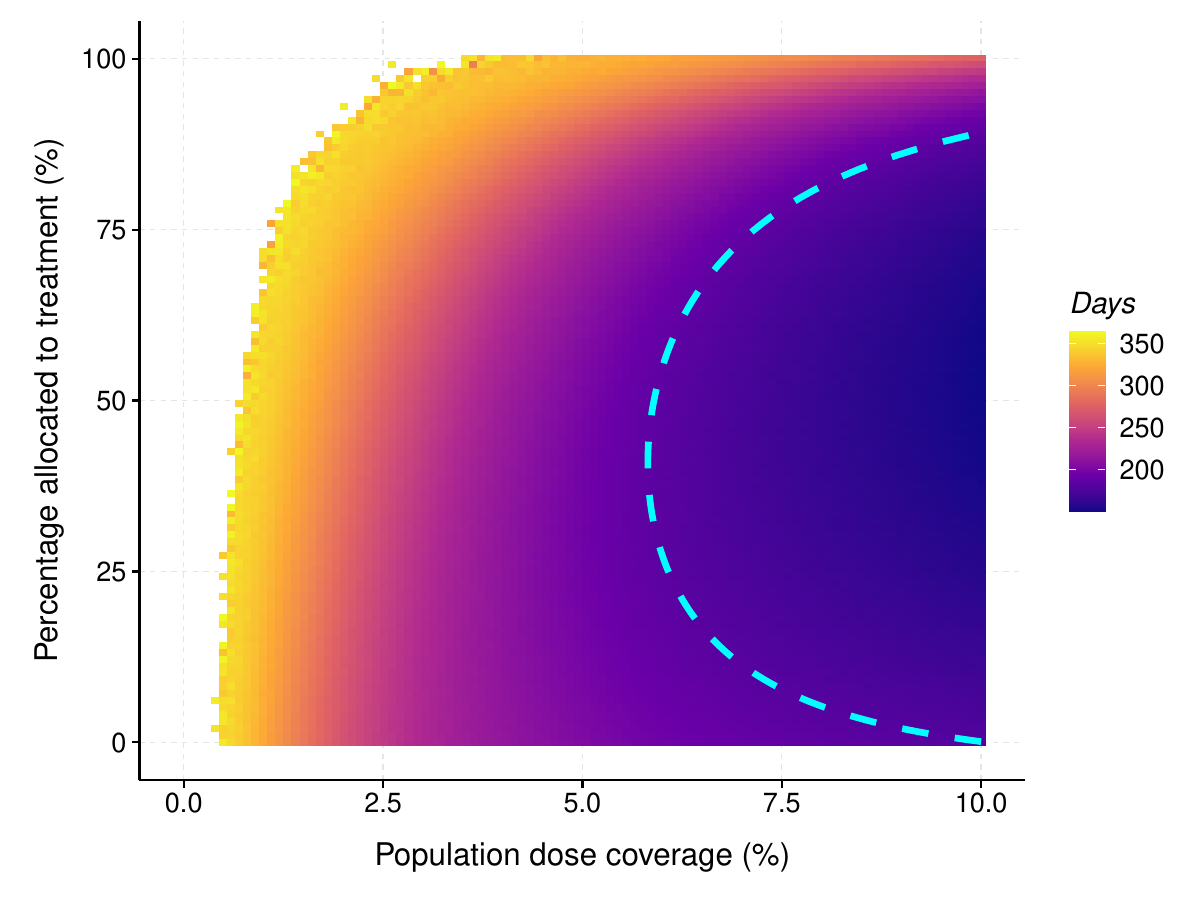}
        \caption{Time to extinction (days)}
        \label{fig:coverage-vs-pcttreatment-unlimited-doses-time}
    \end{subfigure}
    \caption{Role of the percentage of the population that can be covered by doses and the percentage of these doses that are treatment doses. 
    Population of 100K individuals, with $\R_0=3.0$. 
    The cyan curve shows $\R_v^{\eqref{sys:unlimited-vacc}}=1$.
    (a) Reproduction number $\R_v^{\eqref{sys:unlimited-vacc}}$.
    (b) Average number of days until absorption in the CTMC analogue to \eqref{sys:unlimited-vacc} over 10,000 realisations.
    The white region indicates that extinction did not occur within the simulation period of 1 year.}
    \label{fig:coverage-vs-pcttreatment-unlimited-doses-extra}
\end{figure}

\begin{figure}[htbp]
    \centering
    \begin{subfigure}{0.49\textwidth}
        \includegraphics[width=\linewidth]{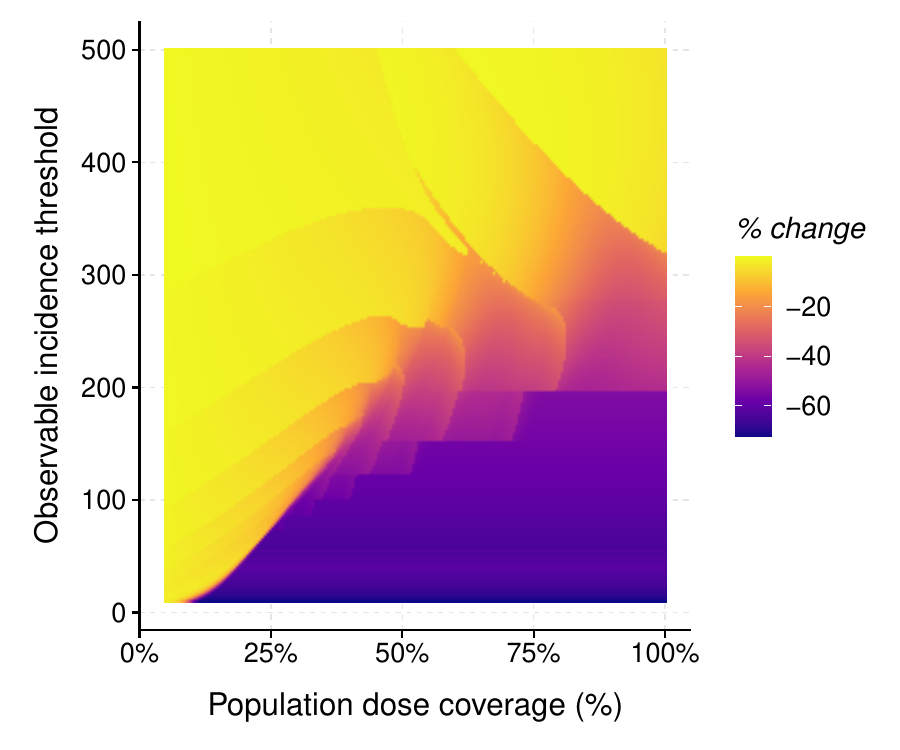}
        \caption{One month (effect)}
        \label{fig:impulsive-deliveries-reactive-schedule-1m}
    \end{subfigure}
    \begin{subfigure}{0.49\textwidth}
        \includegraphics[width=\linewidth]{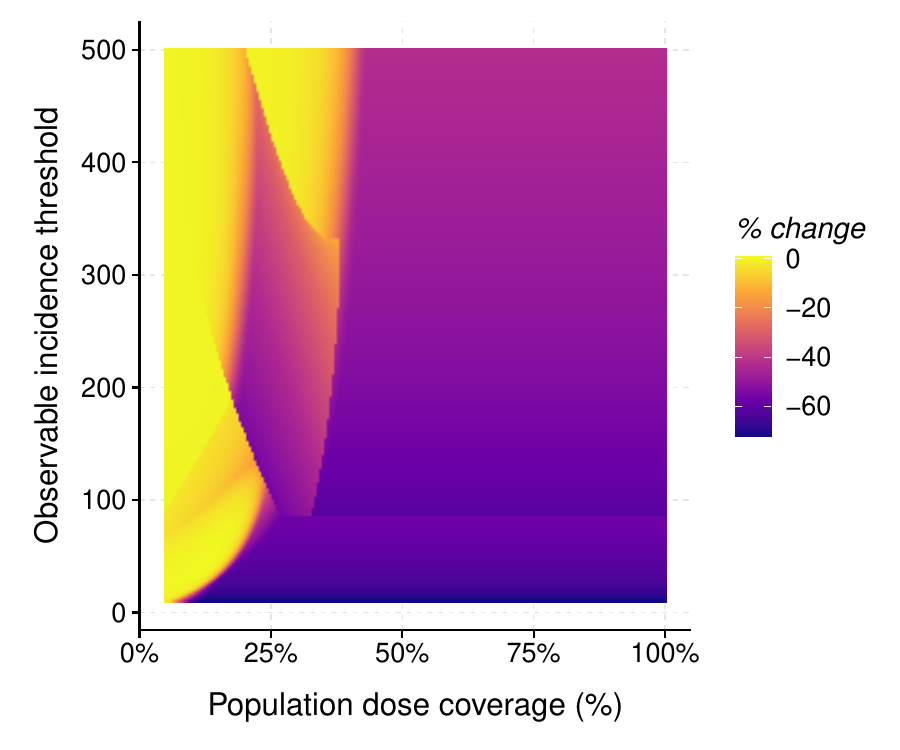}
        \caption{Six months (effect)}
        \label{fig:impulsive-deliveries-reactive-schedule-6m}
    \end{subfigure} \\
    \begin{subfigure}{0.49\textwidth}
        \includegraphics[width=\linewidth]{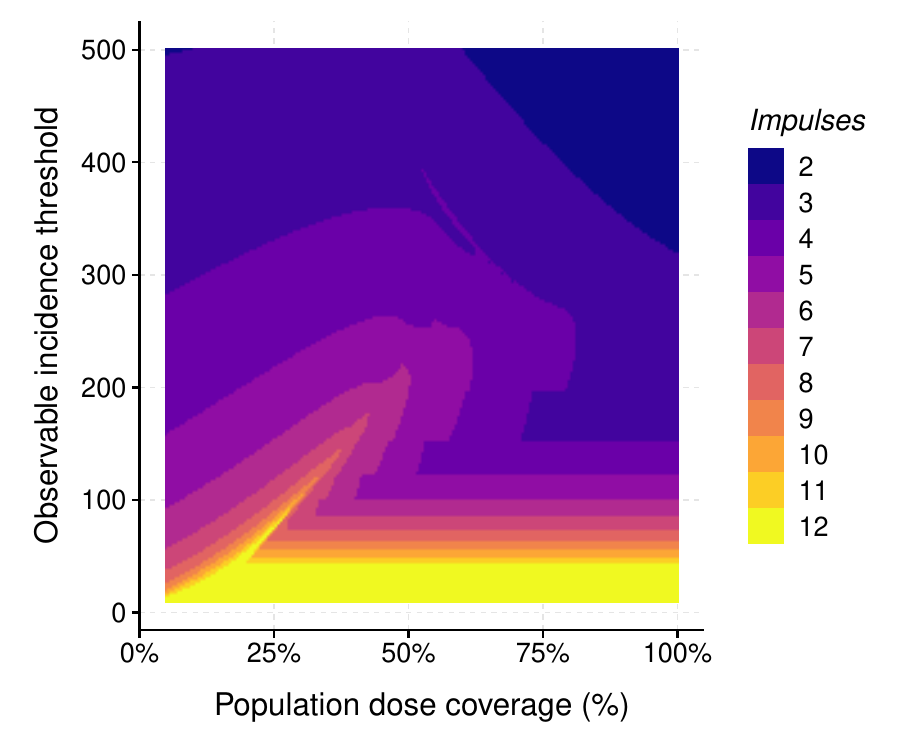}
        \caption{One month (impulses)}
        \label{fig:impulsive-deliveries-reactive-schedule-1m-impulses}
    \end{subfigure}
    \begin{subfigure}{0.49\textwidth}
        \includegraphics[width=\linewidth]{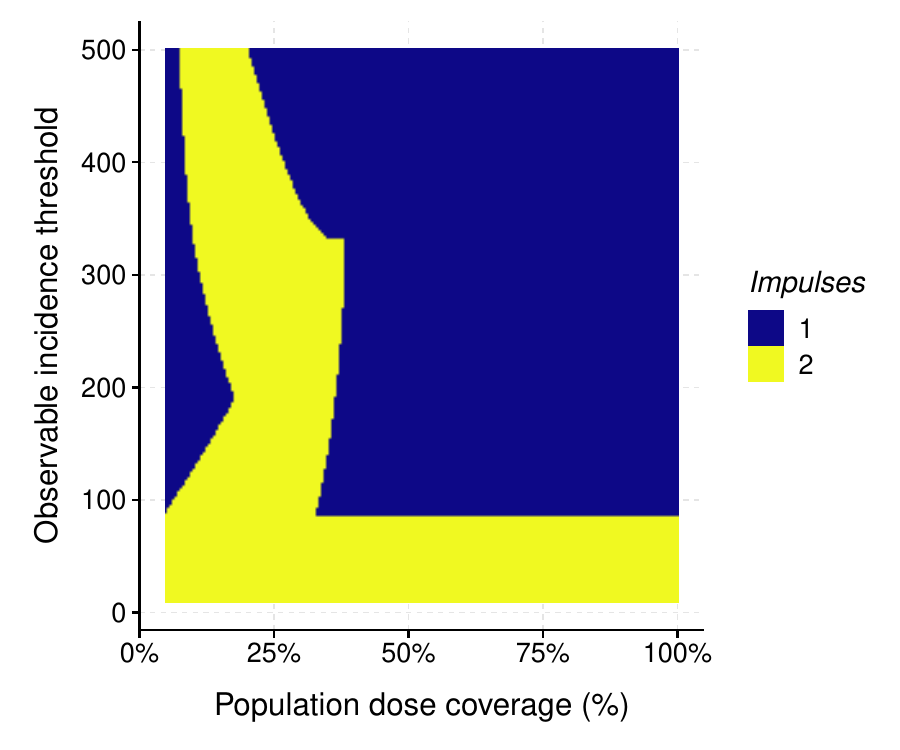}
        \caption{Six months (impulses)}
        \label{fig:impulsive-deliveries-reactive-schedule-6m-impulses}
    \end{subfigure}
    \caption{(a,b) Relative benefit (percentage change in cumulative detectable cases over 1 year compared to a scenario without intervention) when utilizing impulsive dose delivery under strict dose limitations. 
    (c,d) Number of reactive shipments in a one-year period corresponding to (a,b).}
    \label{fig:impulsive-deliveries-reactive-schedule}
\end{figure}

%%%%%%%%%%%%%%%%%%%%%
%%%%%%%%%%%%%%%%%%%%%

\bibliographystyleapp{plain30}
\bibliographyapp{bibliography,biblio_Arino_Julien}
\end{document}